\providecommand{\tk}[1]{\textcolor{black}{#1}}
\providecommand{\ve}[1]{\boldsymbol{#1}}
\newcommand{\argmax}{\operatornamewithlimits{argmax}}
\newcommand{\argmin}{\operatornamewithlimits{argmin}}
\newcommand{\GG}{\mathbb{G}}         
\newcommand{\EE}{\mathbb{E}}           
\newcommand{\II}{\mathbb{I}}           
\newcommand{\thma}{\begin{thm}}
\newcommand{\thmb}{\end{thm}}
\newtheorem{defi}{Definition}
\newtheorem{thm}{Theorem}
\newcommand{\enuma}{\begin{enumerate}}
\newcommand{\enumb}{\end{enumerate}}
\newcommand{\ena}{\begin{enumerate}}
\newcommand{\enb}{\end{enumerate}}
\newcommand{\itema}{\begin{itemize}}
\newcommand{\itemb}{\end{itemize}}
\newcommand{\ita}{\begin{itemize}}
\newcommand{\itb}{\end{itemize}}
\newcommand{\alignb}{\end{align}}
\newcommand{\proofa}{\begin{proof}}
\newcommand{\proofb}{\end{proof}}
\newcommand{\bla}{\begin{block}}
\newcommand{\blb}{\end{block}}
\newcommand{\seqb}{\end{equation*}}
\newcommand{\bth}{\ve{\theta}}
\newcommand{\bhth}{\wh{\ve{\theta}}}
\newcommand{\bTh}{\ve{\Theta}}
\providecommand{\mc}[1]{\mathcal{#1}}
\providecommand{\mb}[1]{\boldsymbol{#1}}
\providecommand{\mv}[1]{\vec{#1}}
\providecommand{\wh}[1]{\widehat{#1}}
\providecommand{\mhc}[1]{\widehat{\mathcal{#1}}}
\providecommand{\mhb}[1]{\hat{\boldsymbol{#1}}}
\newcommand{\Real}{\mathbb{R}}
\newcommand{\conv}{\rightarrow}
\newcommand{\defa}{\begin{defi}}
\newcommand{\defb}{\end{defi}}
\newcommand{\defeq}{\overset{\triangle}{=}}
\newtheorem{Lem}{Lemma}[section]
\newcommand{\from}{{\ensuremath{\colon}}}           
\newcommand{\comment}[1]{}
\begin{document}

\title{Graph Classification using Signal-Subgraphs: Applications in Statistical Connectomics}

\author{Joshua T.~Vogelstein,
		William~R.~Gray,
        R.~Jacob~Vogelstein,
        and~Carey~E.~Priebe
\IEEEcompsocitemizethanks{\IEEEcompsocthanksitem J.T. Vogelstein and C.E. Priebe are with the Department
of Applied Mathematics and Statistics, Johns Hopkins University, Baltimore, MD 21218. 
E-mail: \{joshuav,cep\}@jhu.edu
\IEEEcompsocthanksitem W.R. Gray and R.J. Vogelstein are with the Johns Hopkins University Applied Physics Laboratory, Laurel, MD, 20723.}
\thanks{}}
 
\markboth{Signal Subgraph Classifier}%
{Graph Classification}

\IEEEcompsoctitleabstractindextext{%
\begin{abstract}
This manuscript considers the following ``graph classification'' question: given a collection of graphs and associated classes, how can one predict the class of a newly observed graph?  To address this question we propose a statistical model for graph/class pairs.  This model naturally leads to a set of estimators to identify the class-conditional signal, or ``signal-subgraph,'' defined as the collection of edges that are probabilistically different between the classes. The estimators admit classifiers which are asymptotically optimal and efficient, but differ by their assumption about the ``coherency'' of the signal-subgraph (coherency is the extent to which the signal-edges ``stick together'' around a common subset of vertices). Via simulation, the best estimator is shown to be not just a function of the coherency of the model, but also the number of training samples.  These estimators are employed to address a contemporary neuroscience question: can we classify ``connectomes'' (brain-graphs) according to sex?  The answer is yes, and significantly better than all benchmark algorithms considered.  Synthetic data analysis demonstrates that even when the model is correct, given the relatively small number of training samples, the estimated signal-subgraph should be taken with a grain of salt.  We conclude by discussing several possible extensions.
\end{abstract}

\begin{keywords}
statistical inference, graph theory, network theory, structural pattern recognition, connectome, classification.
\end{keywords}}

\maketitle
\IEEEdisplaynotcompsoctitleabstractindextext
\IEEEpeerreviewmaketitle

\section{Introduction} \label{sec:intro}

\IEEEPARstart{G}{raphs} are emerging as a prevalent form of data representation in fields ranging from optical character recognition and chemistry \cite{Bunke2011} to neuroscience \cite{Hagmann2010}.  While statistical inference techniques for vector-valued data are widespread, statistical tools for the analysis of graph-valued data are relatively rare \cite{Bunke2011}. In this work we consider the task of \emph{\tk{labeled} graph classification}: given a collection of labeled graphs and their corresponding class\tk{es}\comment{labels}, can we accurately infer the class \comment{label} for a new graph?  \tk{Note that we assume throughout that each vertex has a unique label, and that all graphs have the same number of vertices with the same vertex labels.}

We propose and analyze a joint graph/class model---sufficiently simple to characterize its asymptotic properties, and sufficiently rich to afford useful empirical applications.  This model admits a class-conditional signal encoded in a subset of edges, the \emph{signal-subgraph}. Finding the signal-subgraph amounts to providing an understanding of the differences between the two graph classes.  Moreover,
\tk{borrowing a term from the compressive sensing literature \cite{Donoho2006, Candes2008}}, 
 we are interested in learning to what extent this signal is \emph{coherent}; that is, to what extent are the signal-subgraph edges incident to a relatively small set of vertices. \tk{In other words, if the signal is sparse in the edges, then the signal-subgraph is incoherent, if it is also sparse in the vertices, then the signal-subgraph is coherent (we formally define these notions below).} \comment{If the signal-subgraph is strongly coherent, this suggests that the signal is carried by a few important vertices in the graph; otherwise, the signal is more widely distributed across the graph, with no particularly special vertices. }

This graph-model based approach is qualitatively different from most previous approaches which utilize only \tk{unique} vertex labels or graph structure.  In the former case, simply representing the adjacency matrix with a vector and applying standard machine learning techniques ignores graph structure (for instance, it is not clear how to implement a coherent signal-subgraph estimator in this representation).  In the latter case, computing a set of graph invariants (such as clustering coefficient), and then classifying using only these invariants ignores vertex labels \tk{\cite{Kudo2005,Ketkar2009,Bunke2011}}.  \comment{Neither of these approaches use both vertex labels and graph structure.}

\tk{While some of the above approaches consider attributed vertices or edges, we are unable to find any that utilize both \emph{unique} vertex labels and graph structure. The field of \emph{connectomics} (the study of brain-graphs), however, is ripe with many examples of brain-graphs with vertex labels.  In invertebrate brain-graphs, for example, often each neuron is named, such that one can compare neurons across individuals of the same species \cite{North2007}.  In vertebrate neurobiology, while neurons are rarely named, ``neuron types''  \cite{Shepherd2007} and neuroanatomical regions \cite{Nolte2002} are named.  Moreover, a widely held view is that many psychiatric issues are fundamentally ``connectopathies'' \cite{LichtmanSanes08, Bassett2009}.  For prognostic and diagnostic purposes, merely being able to differentiate groups of brain-graphs from one another is sufficient.  However, for treatment, it is desirable to know which vertices and/or edges are malfunctioning, such that therapy can be targeted to those locations. This is the motivating application for our work.}

We demonstrate via theory, simulation, analysis of a neurobiological data set (magnetic resonance based connectome \tk{sex classification}), and synthetic data analysis, that utilizing graph structure can significantly enhance classification accuracy.  However, the best approach for any particular data set is not just a function of the model, but also the amount of data.  Moreover, even when the model is true, given a relatively small sample size, the estimated signal-subgraph will often overlap with the truth, but not fully capture it.  Nonetheless, the classifiers described below still significantly outperform the benchmarks.

\section{Methods} 
\label{sec:methods}


\subsection{Setting}

Let $\GG: \Omega \to  \mc{G}$ be a graph-valued random variable with samples $G_i$.  Each graph $G=(\mc{V},E)$ is defined by a set of $V$ vertices, $\mc{V}=\{v_i\}_{i \in [V]}$, where $[V]=\{1,\ldots, V\}$, and a set of  edges between pairs of vertices $E \subseteq V \times V$. Let $A: \Omega \to  \mc{A}$ be an adjacency matrix-valued random variable taking values $a \in \mc{A} \subseteq \Real^{V \times V}$, identifying which vertices share an edge. Let $Y:\Omega \to  \mc{Y}$ be a discrete-valued random variable with samples $y_i$.  Assume the existence of a collection of $n$ exchangeable samples of graphs and their corresponding classes from some true but unknown joint distribution: $\{(\GG_i,Y_i)\}_{i \in [n]} \overset{exch.}{\sim} F_{\GG,Y}$. Our aim (exploitation task) is to build a graph classifier that can take a new graph, $\GG$, and correctly estimate its class, $y$, assuming that they are jointly sampled from \comment{the same} \tk{some} distribution, $F_{\GG,Y}$.  Moreover, we are interested solely in graph classifiers that are \emph{interpretable} with respect to the vertices and edges of the graph. In other words, nonlinear manifold learning, feature extraction, and related approaches are unacceptable.  

\tk{We adopt the common practice of identifying graphs with their adjacency matrices.  We note, however, that operations available on the latter (addition, multiplication) are not intrinsic to the former.}

\subsection{Model} 
\label{sub:model}

\comment{A model defines the set of distributions under consideration.  In the graph classification domain, we} Consider the model, $\mc{F}_{\GG,Y}$, which includes all joint distributions over graphs and classes under consideration: $\mc{F}_{\GG,Y}=\{F_{\GG, Y}(\cdot; \bth) : \bth \in \bTh\}$, where $\bth \in \bTh$ indexes the distributions.  \comment{Two standard approaches for tackling a classification problem are (i) the \emph{generative} approach and (ii) the \emph{discriminative} approach.  In a generative strategy, one decomposes the joint distribution into a product of a likelihood term and a prior term:  $F_{\GG,Y}=F_{\GG | Y}F_Y$.  In a discriminative strategy, one decomposes the joint distribution into a posterior term and a marginal term: $F_{\GG,Y}=F_{Y | \GG}F_{\GG}$.}  We proceed via a hybrid generative-discriminative approach \tk{\cite{Lasserre2006}} whereby we describe a generative model and place constraints on the discriminant boundary.

First,  assume that each graph has the same set of uniquely labeled vertices, so that all the variability in the graphs is in the adjacency matrix, which implies that $F_{\GG,Y}=F_{A,Y}$. Second, assume edges are independent; that is, $F_{A,Y}=\prod_{u,v \in \mc{E}} F_{A_{uv},Y}$, where 
$\mc{E} \tk{\subseteq} V \times V$ is the set of all possible edges.  Now, consider the generative decomposition \tk{$F_{A,Y}=F_{A|Y} F_Y$}, and let $F_{uv|y}=F_{A_{uv} | Y=y}$ and $\pi_y=F_{Y=y}$.  Third, assume the existence of a class-conditional difference; that is, $F_{uv|0} \neq F_{uv|1}$ for some $(u,v) \in \mc{E}$, and denote the edges satisfying this condition  the \emph{signal-subgraph}, $\mc{S}=\{(u,v) \in \mc{E}: F_{uv|0} \neq F_{uv|1}\}$.  Fourth, \tk{although the following theory and algorithms are valid for both directed and undirected graphs,} for concreteness, assume that the graphs are \emph{simple} graphs; that is, undirected, with binary edges, and lacking (self-) loops \tk{(so $\mc{E}=\binom{V}{2}$)}.  Thus, the likelihood of an edge between vertex $u$ and $v$ is given by a Bernoulli random variable with a scalar probability parameter:  $F_{uv|y}(A_{uv})=\text{Bern}(A_{uv}; p_{uv|y})$. Together, these four assumptions imply the following model: 
\begin{equation}
\mc{F}_{\GG,Y}=\{F_{A, Y}(a,y; \bth) \quad \forall a\in\mc{A},y\in\mc{Y}: \bth \in \bTh\},
\end{equation} 
where
\begin{multline} \label{eq:model}
F_{A,Y}(a,y; \theta) =  \prod_{uv \in \mc{S}} \text{Bern}(a_{uv}; p_{uv|y})  \pi_y 
\\ \times \prod_{uv \in \mc{E} \backslash \mc{S}} \text{Bern}(a_{uv}; p_{uv}),
\end{multline}
and $\bth=\{\mb{p},\mb{\pi},\mc{S}\}$. The likelihood parameter is constrained such that each element must be between zero and one: $\mb{p}\in (0,1)^{\binom{V}{2} \times |\mc{Y}|}$.  The prior parameter, $\mb{\pi}=(\pi_1, \ldots, \pi_{|\mc{Y}|})$, must have elements greater than or equal to zero and sum to one: $\pi_y \geq 0,$ $\sum_y \pi_y=1$.  The signal-subgraph parameter is a non-empty subset of the set of possible edges, $\mc{S} \subseteq \mc{E}$ and $\mc{S} \neq \emptyset$.

We consider up to two additional constraints on $\mc{S}$.  First, the size of the signal-subgraph may be constrained such that $|\mc{S}| \leq s$. Second, the minimum number of vertices onto which the collection of edges is incident to is constrained such that $\mc{S}=\{(u,v): u \cup v \in \mc{U}\}$, where $\mc{U}$ is a set of \emph{signal-vertices} with $|\mc{U}|\leq m$. 
Edges in the signal-subgraph are called \emph{signal-edges}. \tk{Note that given a collection of signal-edges, the signal-vertex set may not be unique.  While it may be natural to treat $\mc{S}$ as a prior, we treat it as a parameter of the model; the constraints, $s$ and $m$, are considered \emph{hyper-parameters}.}

Note that given a specification of the class-conditional likelihood of each edge and class-prior, one completely defines a joint distribution over graphs and classes; the signal-subgraph is implicit in that parameterization. However, the likelihood parameters for all edges not in the signal-subgraph, $p_{uv|y}=p_{uv} \, \forall \, y \in \mc{Y}, (u,v) \notin \mc{S}$,  are \emph{nuisance} parameters; that is, they contain no class-conditional signal.  When computing a relative posterior class estimate, these nuisance parameters cancel in the ratio.


\subsection{Classifier} 
\label{sub:classifier}

\comment{Formally, we say that} A graph classifier, $h \in \mc{H}$, is any function satisfying $h: \mc{G} \to \mc{Y}$.  We desire the ``best'' possible classifier, $h_*$. To define best, we first choose a loss function, $\ell_h: \mc{G} \times \mc{Y} \to \Real_+$, specifically the $0-1$ loss function:
\begin{align}
\ell_h(G,y) \defeq \II \{h(G) \neq y\}, 
\end{align}
where $\II\{\cdot\}$ is the indicator function, equaling one whenever its argument is true, and zero otherwise.  Further, let risk, $R: \mc{F} \times \mc{H} \to \Real_+$ be the expected loss under the true distribution:
\begin{align}
R(F,h) \defeq \EE_F[\ell_h(\GG,Y)].
\end{align}
The Bayes optimal (best) classifier for a given distribution $F$ minimizes risk.
It can be shown that the classifier that maximizes the class-conditional posterior $F_{Y | \GG}$ is optimal \cite{Bickel2000}:
\begin{align} \label{eq:map}
h_* &= \argmin_{h \in \mc{H}} \EE_F[\ell_h (\GG,Y)] 
\nonumber \\ &= \argmax_{y \in \mc{Y}} F_{\GG|Y=y} F_{Y=y}.
\end{align}
Given the proposed model, Eq. \ref{eq:map} can be further factorized using the above four assumptions:
\begin{align}
h_*(G)
&= \argmax_{y \in \mc{Y}} \prod_{u,v \in \mc{S}} \text{Bern}(A_{uv}; p_{uv|y}) \pi_y.
\end{align}
Unfortunately Bayes optimal classifiers are typically unavailable. In such settings, it is therefore desirable to induce a classifier estimate from a set of \emph{training data}. Formally, let $\mc{T}_n= \{(\GG_i,Y_i)\}_{i \in [n]}$ denote the training corpus, where each graph-class pair is sampled exchangeably from the true but unknown distribution: $(\GG_i,Y_i) \overset{exch.}{\sim} F_{\GG, Y}$.  Given such a training corpus and an unclassified graph $G$, an induced classifier predicts the true (but unknown) class of $G$, $\wh{h}\from \mc{G} \times (\mc{G} \times \mc{Y})^n  \to \mc{Y}$.  When a model $\mc{F}_{\GG,Y}$ is specified, a beloved approach is to use a  \emph{Bayes plugin classifier}. Due to the above simplifying assumptions, the Bayes plugin classifier for this model is defined as follows.  First, estimate the  model parameters $\bth=\{\mc{S}, \mb{p}, \mb{\pi}\}$. Second, plug those estimates into the above equation.  The result is a Bayes plugin graph classifier:
\begin{align}
\wh{h}(G; \mc{T}_n) \defeq  \argmax_{y \in \mc{Y}} \prod_{u,v \in \mhc{S}}
\wh{p}_{uv|y}^{a_{uv}}(1-\wh{p}_{uv|y})^{(1-a_{uv})} \wh{\pi}_y,
\end{align}
where the Bernoulli probability is explicit. To implement such a classifier estimate, we specify estimators for $\mc{S}$, $\mb{\pi}$ and $\mb{p}$.


\subsection{Estimators} 
\label{sub:estimators}

\tk{\subsubsection{Desiderata}}

\comment{In this section we describe estimators for the above model.  An \emph{estimator} is a function that maps from the multiple-sample space to the parameter space: $\bhth_n: \Xi^n \to \bTh$. The output of this function is called the \emph{estimate}.  In the graph classification domain, for example, $\Xi=\mc{G} \times \mc{Y}$.  In a slight abuse of notation, we will also refer to the sequence of estimators, $\bhth_1,\bhth_2, \ldots$, as an estimator.}  We desire a \comment{(}sequence of\comment{)} estimators\tk{, $\bhth_1,\bhth_2, \ldots$,} that satisfy the following desiderata:

\begin{itemize}
	\item \textbf{Consistent}: an estimator is consistent (in some specified sense) if its sequence converges in the limit to the true value: $\lim_{n \conv \infty} \bhth_n = \bth$.  
	\item \textbf{Robust}: an estimator is robust if the resulting estimate is relatively insensitive to small model misspecifications.  Because the space of models is massive (uncountably infinite), it is intractable to consider all misspecifications, so we consider only a few of them, as described below.
	\item \textbf{Quadratic complexity}: computational time complexity should be no more than quadratic in the number of vertices.
	\item \textbf{Interpretable}: we desire that the parameters are interpretable with respect to a subset of vertices and/or edges.
\end{itemize}
In addition to the above theoretical desiderata, we also desire appealing finite sample and empirical performance.

\subsubsection{Signal-Subgraph Estimators} 
\label{ssub:subsubsection_name1}

Na\"{i}vely, one might consider a search over all possible signal-subgraphs by plugging each one in to the classifier and selecting the best performing option.  This strategy is intractable because the number of signal-subgraphs scales super-exponentially with the number of vertices (see Figure \ref{fig:numgraphs}, left panel). Specifically, the number of possible edges in a simple graph with $V$ vertices is $d_V=\binom{V}{2}$, so the number of unique possible signal-subgraphs is $2^{\binom{V}{2}}$.  Searching over all of them is sufficiently computationally taxing 
\tk{as to motivate the search for other alternatives.}
\comment{We therefore consider several alternatives.}


\begin{figure*}[tb!]
	\centering
		\includegraphics[width=1.0\linewidth]{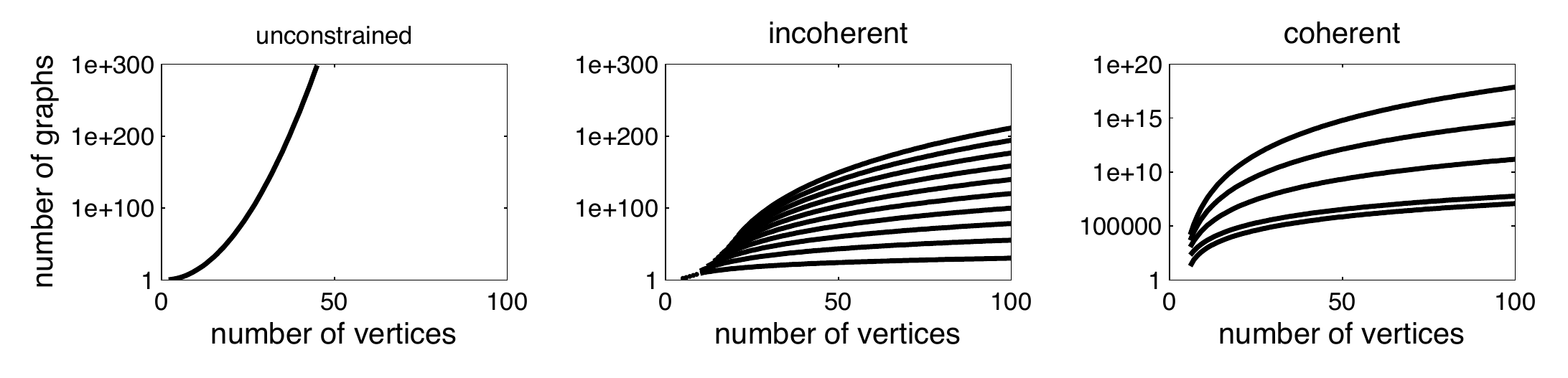}
	\caption{Exhaustive searches for the signal-subgraph, even given severe constraints, are computationally intractable even for small graphs.  The three panels illustrate the 
	the number of unique simple subgraphs as a function of the number of vertices $V$
	for the three different constraint types considered: unconstrained, edge constrained, and both edge and vertex constrained (coherent).  Note the ordinates are all log scale.   On the left is the unconstrained scenario, that is, all possible subgraphs for a given number of vertices.  In the middle panel, each line shows the number of subgraphs with fixed number of signal-edges, $s$, ranging from 10 to 100, incrementing by 10 with each line.  The right panel shows the number of subgraphs for various fixed $s$ and only a single signal-vertex; that is, all edges are incident to one vertex.  
	}
	\label{fig:numgraphs}
\end{figure*}

\begin{figure*}[tb!]
	\centering
		\includegraphics[width=1.0\linewidth]{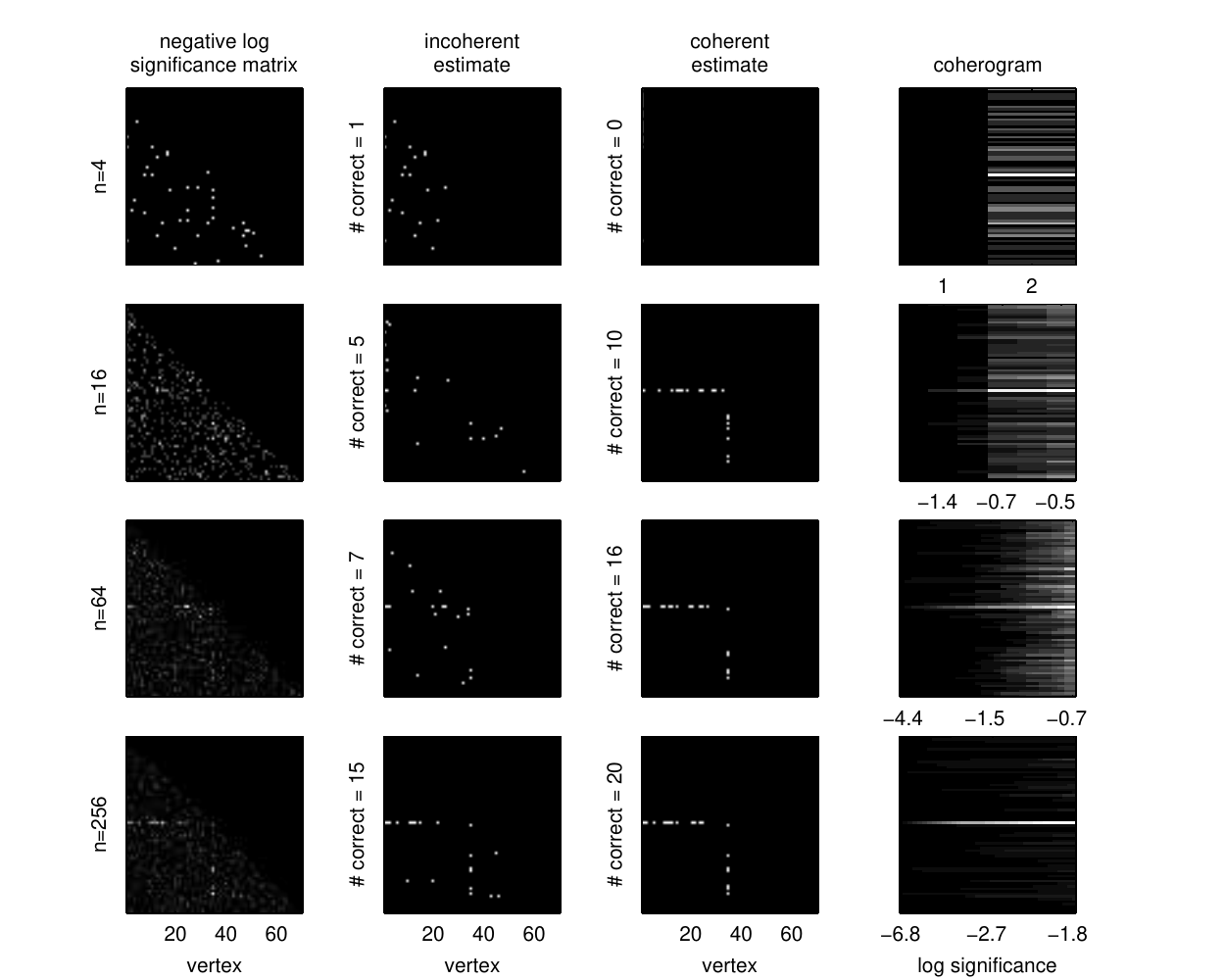}
	\caption{An example of the coherent signal-subgraph estimate's improved accuracy over the incoherent signal-subgraph estimate, for a particular homogeneous two-class model specified by: $\mc{M}_{70}(1,20;0.5,0.1,0.3)$. Each row shows the same columns but for increasing the number of graph/class samples.  The columns show the: (far left) negative log-significant matrix, computed using Fisher's exact test (lighter means more significant; each panel is scaled independent of the others because only relative significance matters here); (middle left) incoherent estimate of the signal-subgraph; (middle right) coherent estimate of the signal-subgraph; (far right) coherogram.  As the number of training samples increases (lower rows), both the incoherent and coherent estimates converge to the truth (the ordinate labels of the middle panels indicate the number of edges correctly identified).  For these examples, the coherent estimator tends to find more true edges.  The coherogram visually depicts the coherency of the signal; it is also converging to the truth---the signal-subgraph here contains a single signal-vertex.}
	\label{fig:4x4}
\end{figure*}

Before proceeding, recall that each edge is independent; thus, one can evaluate each edge separately (although \tk{treating edges independently is} not necessarily advisable, consider the Stein estimator \cite{Stein1956}).  Formally, consider a hypothesis test for each edge.  The \tk{simple} null hypothesis is that the class-conditional edge distributions are the same, so $H_0: F_{uv|0}=F_{uv|1}$.  The composite alternative hypothesis is that they differ, $H_A: F_{uv|0} \neq F_{uv|1}$.  Given such hypothesis tests, one can construct test statistics $T_{uv}^{(n)}: \mc{T}_n \to \Real_+$.  We reject the null in favor of the alternative whenever the value of the test statistic is greater than some critical-value: $T_{uv}^{(n)}(\mc{T}_n)>c$.  We can therefore construct a \emph{significance matrix} $\mb{T} \defeq T_{uv}^{(n)}$, which is the sufficient statistic for the signal-subgraph estimators. 
\tk{Example test statistics include Fisher's and chi-squared, which will be discussed further below.}
Whichever test statistic one uses, the sufficient statistics are captured in a $2 \times |\mc{Y}|$  contingency table, indicating the number of times edge $u,v$ was observed in each class.  For example, the two-class contingency table for each edge is given by:  
%
\begin{table}[h!]
\begin{center}
\begin{tabular}{c||c|c||c}
 & Class 0  & Class 1 & Total \\
\hline\hline
Edge & $n_{uv|0}$ & $n_{uv|1}$ & $n_{uv}$ \\ \hline
No Edge & $n_0-n_{uv|0}$ & $n_1-n_{uv|1}$ & $n-n_{uv}$ \\ \hline \hline
Total & $n_0$ & $n_1$ & $n$. 
\end{tabular}
\end{center}
\label{tab:contingency}
\end{table}%

\tk{For simplicity, we will assume that $|\mc{Y}|=2$ for the remainder, though the general case is  relatively straightforward.}

\paragraph{\emph{Incoherent signal-subgraph Estimators}} 
\label{par:paragraph_name}

Assume the size of the signal-subgraph, $|\mc{E}|=s$, is known.  The number of subgraphs with $s$ edges on $V$ vertices is given by $\binom{d_V}{s}$; also super-exponential (see Figure \ref{fig:numgraphs}, middle panel). Thus, searching them all is currently computationally intractable.  When $s$ is given, under the  independent edge assumption, one can choose the critical value \emph{a posteriori} to ensure that only $s$ edges are rejected \tk{under the null (that is, have significant class-conditional differences)}:
\begin{align}
	&\text{minimize } c \nonumber \\ &\text{subject to } \sum_{(u,v) \in \mc{E}} \II \{T_{uv}^{(n)} < c\} \geq s.
\end{align}
Therefore, an estimate of the signal-subgraph is the collection of $s$ edges with minimal test statistics.  Let $T_{(1)} < T_{(2)} < \cdots < T_{(d_V)}$ indicate the \emph{ordered} test statistics (dropping the superscript indicating the number of samples for brevity).  Then, the \emph{incoherent signal-subgraph estimator} is given by $\mhc{S}_n(s)=\{e_{(1)}, \ldots, e_{(s)}\}$, where $e_{(u)}$ indicates the $u^{th}$ edge ordered by significance of its test statistic, $T_{(u)}$.  

\tk{Note that the number of distinct test-statistic values is typically much smaller than the number of possible settings of $s$; specifically, the number of unique test statistic values will be $t \leq \min(|\mc{E}|,(n_0+1) (n_1+1))$.  In practice, $t$ is often be far less than either of the upper bounds, because not every edge has a unique contingency table. In such scenarios, certain settings of the hyper-parameters will lead to ``ties'', that is, edges that are equally valid under the assumptions.  In such settings, we simply randomly choose edges satisfying the criterion.}

\tk{Pseudocode for implementing the incoherent signal-subgraph estimator is provided in Algorithm \ref{alg:inc}, and MATLAB code is available from \url{http://jovo.me}.}

\begin{algorithm}
\caption{Pseudocode for estimating incoherent signal-subgraph.}
\label{alg:inc}
\begin{algorithmic}[1]
\REQUIRE $\mc{T}_n$ and $s$
\ENSURE $\mhc{S}_n(s)$
\STATE Compute test statistics $T_{uv}^{(n)}$ for all $(u,v) \in \mc{E}$
\STATE Sort each edge according to its test-statistic rank, $T_{(1)} \leq T_{(2)} \leq \cdots \leq T_{(d_V)}$
\STATE Let $\mhc{S}_n(s)=\{e_{(1)}, \ldots, e_{(s)}\}$, arbitrarily breaking ties as necessary.
\end{algorithmic}
\end{algorithm}

\paragraph{\emph{Coherent Signal-Subgraph Estimators}}

In addition to the size of the signal-subgraph, also assume that each of the edges in the signal-subgraph are incident to one of $m$ special vertices called \emph{signal-vertices}. While this assumption further constrains the candidate sets of edges, the number of feasible sets still scales super exponentially (see Figure \ref{fig:numgraphs}, right panel).  Therefore, we again take a greedy approach.  

First, compute the significance of each edge, as above, yielding ordered test statistics. \tk{Second, rank edges by significance with respect to each vertex,  $e_{k,(1)} \leq e_{k,(2)} \leq \ldots \leq e_{k,(n-1)}$ for all $k \in \mc{V}$.  Third, initialize the critical value at zero, $c=0$. Fourth, assign each vertex a score equal to the number of edges incident to that vertex more significant than the critical value, $w_{v;c}=\sum_{u \in [V]} \II\{T_{v,u} > c\}$.  
Fifth, sort the vertex significance scores, $w_{(1);c} \geq w_{(2); c} \geq \cdots \geq w_{(V);c}$.
Sixth, check if there exists $m$ vertices whose scores sum to greater than or equal the size of the signal-subgraph, $s$.  That is, check whether the following optimization problem is satisfied:
\begin{align} \label{eq:coh}
	&\text{minimize } c \nonumber \\
	&\text{subject to } \sum_{v \in [m]} w_{(v);c}\geq s.
\end{align}
If so, call the collection of $s$ most significant edges from within that subset the \emph{coherent signal-subgraph estimate}, $\mhc{S}_n(s,m)$. 
If not, increase $c$ and go back to step four. 
As above, we break ties arbitrarily.
Pseudocode for implementing the coherent signal-subgraph estimator is provided in Algorithm \ref{alg:coh}, and MATLAB code is available from \url{http://jovo.me}.}

\begin{algorithm}
\caption{Pseudocode for estimating coherent signal-subgraph.}
\label{alg:coh}
\begin{algorithmic}[1]
\REQUIRE $\mc{T}_n$ and $(s,m)$
\ENSURE $\mhc{S}_n(s,m)$
\STATE Compute test statistics $T_{uv}^{(n)}$ for all $(u,v) \in \mc{E}$
\STATE Sort each edge according to its vertex-conditional test-statistic rank, $T_{(1),k} \leq T_{(2),k} \leq \cdots \leq T_{(d_V),k}$ for all $k \in \mc{V}$
\STATE Let $c=0$
\STATE Let $w_{v;c}=\sum_{u \in \mc{V}} \II\{T_{v,u}>c\}$ for all $v \in \mc{V}$ 
\STATE Let $w_c=\sum_{v \in [m]} w_{(v);c}$
\WHILE{$w_c < s$}
\STATE Let $c \leftarrow c+ 1$
\STATE Update $w_c$ 
\ENDWHILE
\STATE Let $\mhc{S}_n(s,m)$ be the collection of $s$ edges from amongst those that satisfy Eq. \ref{eq:coh} for the final value of $c$, arbitrarily breaking ties as necessary.
\end{algorithmic}
\end{algorithm}

\paragraph{\emph{Coherograms}}

In the process of estimating the incoherent signal-subgraph, one builds a ``coherogram''.  Each column of the coherogram corresponds to a different critical value $c$, and each row corresponds to a different vertex $v$.  The $(c,v)^{th}$ element of the coherogram $w_{v;c}$ is the number of edges incident to vertex $v$ with test statistic larger than $c$.  Thus, the coherogram gives a visual depiction of the coherence of the signal-subgraph (see Figure \ref{fig:4x4}, right column, for some examples).

\subsubsection{Likelihood Estimators} 
\label{sub:likelihood}

The class-conditional likelihood parameters $p_{uv|y}$ are relatively simple.  In particular, because the graphs are assumed to be simple, $p_{uv|y}$ is just a Bernoulli parameter for each edge in each class.  The maximum likelihood estimator (MLE), which is simply the average value of each edge per class, is a principled choice:
\begin{align}
\wh{p}_{uv|y}^{MLE} = \frac{1}{n_y} \sum_{i | y_i = y} a_{uv}^{(i)},
\end{align}
where $\sum_{i | y_i=y}$ indicates the sum is over all training samples from class y. Unfortunately, the MLE has an undesirable property; in particular, if the data contains no examples of an edge in a particular class, then the MLE will be zero.  If the unclassified graph exhibits that edge, then the estimated probability of it being from that class is zero, which is undesirable. We therefore consider a smoothed estimator:
\begin{align} \label{eq:lik}
\wh{p}_{uv|y} = 
\begin{cases}
\eta_n & \text{if } \max_i a_{uv}^{(i)}=0 \\
1- \eta_n & \text{if } \min_i a_{uv}^{(i)}=1 \\
\wh{p}_{uv|y}^{MLE} & \text{otherwise}
\end{cases}
\end{align}
where we let $\eta_n=1/(10n)$.  



\subsubsection{Prior Estimators} 
\label{sub:prior_estimators}

The priors are the simplest.  The prior probabilities are Bernoulli, and we are concerned only with the case where $|\mc{Y}| \ll n$, so the maximum likelihood estimators suffice:
\begin{align} \label{eq:prior}
\wh{\pi}_y = \frac{n_y}{n},
\end{align}
where $n_y=\sum_{i \in [n]} \II\{y_i = y\}$.


\tk{\subsubsection{Hyper-Parameter Selection}}

\tk{The signal-subgraph estimators require specifying the number of signal-edges $s$, as well as the number of signal-vertices $m$ for the coherent classifier.  In both cases, the number of possible values of finite.  In particular, $s \in [d_V]$ and $m \in [V]$.  Thus, 
to select the best hyper-parameters
we implement cross-validation procedures (see Section \ref{ssub:classifier} for details), iterating over $(s,m) \in \mv{s} \times \mv{m} \subseteq [d_V] \times [V]$. Note that when $m=V$, the coherent signal subgraph estimator reduces to the incoherent signal subgraph estimator. 
For all simulated data, we compare hyper-parameter performance via a training and held-out set.  For the real data application, we decided to use a leave-one-out cross-validation procedure due to the small sample size.}

 %

\tk{\subsubsection{All together}} 
\label{ssub:all_together}

\tk{Putting the above pieces together, Algorithm \ref{alg:3} provides pseudo-code for implementing our signal-subgraph classifiers. MATLAB code is available from the first author's website, \url{http://jovo.me}.}

\begin{algorithm}
\caption{Pseudocode for training signal-subgraph classifiers.}
\label{alg:3}
\begin{algorithmic}[1]
\REQUIRE $\mc{T}_n$ and a set of constraints $(\mv{s},\mv{m})$
\ENSURE $\mhc{S}_n$, $\{\wh{p}_{uv|y}\}_{(u,v) \in \mhc{S}_n}, \{\wh{\pi}_{y}\}_{y \in \{0,1\}}$
\STATE Partition the data for the appropriate cross-validation procedure
\STATE Estimate $p_{uv|y}$ for all $(u,v)$ using Eq. \ref{eq:lik}
\STATE Estimate $\pi_y$ for all $y$ using Eq. \ref{eq:prior}
\FORALL{$(s,m) \in (\mv{s},\mv{m})$}
\STATE Compute $\mhc{S}_n(s,m)$ using Algorithm \ref{alg:inc} or \ref{alg:coh}, as appropriate
\STATE Compute cross-validated error $\wh{L}_{s,m}$ using Eq. \ref{eq:L2}
\ENDFOR
\STATE Let $\mhc{S}_n=\argmin_{(s,m)} \wh{L}_{s,m}$ 
\end{algorithmic}
\end{algorithm}

\subsection{Finite Sample Evaluation Criteria} 
\label{sub:evaluation_criteria}

\subsubsection{Likelihoods and priors} 
\label{ssub:likelihoods_and_priors}

The likelihood and prior estimators will be evaluated with respect to robustness to model misspecifications, finite samples, efficiency, and complexity.


\subsubsection{Classifier} 
\label{ssub:classifier}

We evaluate the classifier's finite sample properties using either held-out or leave-one-out misclassification performance, depending on whether the data is simulated or experimental, respectively.  Formally, given $C$ equally sized subsets of the data, $\{\mc{T}_{1}, \ldots, \mc{T}_{C}\}$, the \emph{cross-validated error} is given by
\begin{align} \label{eq:L2}
	\wh{L}_{\wh{h}(\cdot; \mc{T}_n)} = \frac{1}{C}\sum_{c=1}^C \frac{1}{|\mc{T}_n \backslash \mc{T}_c|}\sum_{G \notin \mc{T}_c} \II\{\wh{h}(G; \mc{T}_{c}) \neq y\}.
\end{align}
Given this definition, let $L_{\mhb{\pi}}$ be the error of the classifier using only the prior estimates, and let $L_*$ be the error for the Bayes optimal classifier.  

To determine whether a classifier is significantly better than chance, we randomly permute the classes of each graph $n_{MC}$ times, and then estimate a na\"ive Bayes classifier using the permuted data, yielding an empirical distribution.  The p-value of a permutation test is the minimum fraction of Monte Carlo permutations that did better than the classifier of interest \cite{Good2010}.  

To determine whether a pair of classifiers are significantly different, we compare the leave-one-out classification results using McNemar's test \cite{McNemar1947}.

\subsubsection{Signal-Subgraph Estimators} 
\label{ssub:signal_subgraph_estimators}


To evaluate absolute performance of the signal-subgraph estimators, we define  ``miss-edge rate'' as the fraction of true edges missed by the signal-subgraph estimator:
\begin{align}
R^x_n = \frac{1}{|\mc{S}|} \sum_{(u,v)\in \mc{S}}\II\{(u,v) \notin \mhc{S}_n\}.
\end{align}
\tk{Note that when $|\mc{S}|$ is fixed, miss-edge rate is a sufficient statistic for all combinations of false/negative positive/negative results.}
Further, we estimate the \emph{relative rate} and \emph{relative efficiency} to evaluate the relative finite sample properties of a pair of consistent estimators. The relative rate is simply $(1-R^{inc}_n)/(1-R^{coh}_n)$.  Relative efficiency is the number of samples required for the coherent estimator to obtain the same rate as the incoherent estimator.

%



%
%
%
%
%

\section{Estimator Properties} 
\label{sec:results}

%

\subsection{Likelihood and Prior Estimators} 
\label{ssub:subsubsection_name4}

\tk{
\begin{Lem}
	$\wh{p}_{uv|y}$ as defined in Eq. \ref{eq:lik}  is an L-estimator.
\end{Lem}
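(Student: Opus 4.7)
The plan is to exhibit $\wh{p}_{uv|y}$ as a (affine) linear combination of the order statistics of the $n_y$ class-$y$ Bernoulli observations of edge $(u,v)$, with coefficients that depend only on $n_y$ and $n$, which is the defining property of an L-estimator.

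First, I would observe that since the observations are binary, the order statistics $A_{(1)} \leq A_{(2)} \leq \cdots \leq A_{(n_y)}$ form a sorted vector of $0$s and $1$s that is completely determined by $k = \sum_{i=1}^{n_y} A_{(i)}$, the number of ones. In these terms, the three branches of Eq. \ref{eq:lik} translate directly into events on the extreme order statistics: $\max_i a_{uv}^{(i)} = 0$ is precisely $A_{(n_y)} = 0$ (i.e., $k=0$); $\min_i a_{uv}^{(i)} = 1$ is precisely $A_{(1)} = 1$ (i.e., $k=n_y$); and the otherwise case is the mixed regime $A_{(1)} = 0$, $A_{(n_y)} = 1$, in which $\wh{p}_{uv|y}^{MLE} = \frac{1}{n_y} \sum_{i=1}^{n_y} A_{(i)}$.

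Next, I would solve the constraints imposed by the three cases for an affine form
\begin{equation*}
\wh{p}_{uv|y} \;=\; c_{0,n_y} \;+\; \sum_{i=1}^{n_y} c_{i,n_y}\, A_{(i)},
\end{equation*}
obtaining $c_{0,n_y} = \eta_n$, $c_{1,n_y} = c_{n_y,n_y} = 1/n_y - \eta_n$, and $c_{i,n_y} = 1/n_y$ for $2 \leq i \leq n_y - 1$. The extreme-index coefficients absorb the shrinkage applied on the boundary, while the interior coefficients recover the MLE in the mixed case.

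Finally, I would verify by substitution that this representation reproduces Eq. \ref{eq:lik} in each branch: the all-zero case yields $\eta_n$; the all-one case yields $\eta_n + 2(1/n_y - \eta_n) + (n_y-2)/n_y = 1-\eta_n$; and the mixed case with $k$ ones yields $\eta_n + (1/n_y - \eta_n) + (k-1)/n_y = k/n_y$. The main (and only) subtlety is the mixed case, where one must use the counting identity $\sum_{i=2}^{n_y-1} A_{(i)} = k-1$, which holds because fixing $A_{(1)}=0$ and $A_{(n_y)}=1$ leaves exactly $k-1$ ones to distribute among the interior order statistics. Since $\wh{p}_{uv|y}$ has thus been written as a linear function of the order statistics with deterministic, sample-size-dependent weights, it is an L-estimator, proving the lemma.
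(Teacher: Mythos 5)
Your proof is correct, but it takes a genuinely different --- and in fact stronger --- route than the paper's. The paper's one-line argument invokes Huber's permissive definition of an L-estimator (a linear combination of \emph{possibly nonlinear functions of} the order statistics) and simply observes that $\wh{p}_{uv|y}$ is a thresholded function of the minimum, maximum, and mean. You instead exhibit $\wh{p}_{uv|y}$ as a genuine affine combination of the order statistics themselves with deterministic, sample-size-dependent weights, which is the classical strict notion of an L-statistic; this is possible only because the observations are binary, so the three branches of Eq.~\ref{eq:lik} are pinned down by the extreme order statistics, and your case-by-case verification (including the counting identity $\sum_{i=2}^{n_y-1}A_{(i)}=k-1$) checks out. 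What your version buys is a representation that makes the L-estimator property transparent without appealing to the ``nonlinear functions'' escape clause; what the paper's version buys is brevity and independence from the binary-data structure. Two small caveats: your closed-form coefficients double-count the single extreme index when $n_y=1$ (an affine representation still exists there, just with $c_{1,1}=1-2\eta_n$), and you implicitly read the $\max_i$/$\min_i$ in Eq.~\ref{eq:lik} as ranging over the class-$y$ samples only, which is the intended though not quite literal reading of that display.
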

\begin{proof}
Huber defines an L-estimator as an estimator that is a linear combination of (possibly nonlinear functions of) the order statistics of the measurements \cite{Huber1981}.  Indeed, $\wh{p}_{uv|y}$ is a thresholded function of the minimum, maximum, and mean.
\end{proof}
Because L-estimators converge to the MLE, our estimators share all the nice asymptotic properties of the MLE.  }
%
%
Moreover, L-estimators are known to be robust to certain model misspecifications \cite{Huber1981}. The prior estimators are MLE's, and therefore also consistent and efficient.
%
%
%
Both prior and likelihood estimates are trivial to compute, as closed-form analytic solutions are available for both.  

\subsection{Signal-Subgraph Estimators} 
\label{ssub:subsubsection_name5}

A variety of test statistics are available for computing the edge-specific class-conditional signal, $T_{uv}^{(n)}$.  Fisher's exact test computes the probability of obtaining a contingency table equal to, or more extreme than, the table resulting from the null hypothesis: that the two classes have the same probability of sampling an edge.  In other words, Fisher's exact test is the most powerful statistical test assuming independent edges \cite{Rice1995}.  \tk{This leads to the following lemma:
\begin{Lem}
	$\mhc{S}_n(s',m') \to \mc{S}$ as $n \to \infty$ when computing $T_{uv}^{(n)}$ via Fisher's exact test, even when $s$ and $m$ are unknown, as long $s' \geq s$ and $m' \geq m$. 
\end{Lem}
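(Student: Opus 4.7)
The plan is to interpret this consistency claim as saying that the miss-edge rate $R^x_n$ tends to zero in probability, i.e., $\mc{S}\subseteq \mhc{S}_n(s',m')$ with probability tending to $1$ as $n\to\infty$; the estimate may contain extra spurious edges when $s'>s$, but every true signal edge is eventually captured. The strategy is to show that Fisher's exact test asymptotically separates the signal edges from the non-signal edges, and that this separation is enough for both greedy algorithms to collect $\mc{S}$.

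First I would establish the per-edge asymptotics of $T_{uv}^{(n)}$. For any signal edge $(u,v)\in\mc{S}$ the true probabilities satisfy $p_{uv|0}\ne p_{uv|1}$, and Fisher's exact test is consistent against any fixed alternative (in fact, the $p$-value decays exponentially in $n$, by a standard large-deviation/Chernoff bound on the two-sample Bernoulli contingency table). Hence $T_{uv}^{(n)}\to 0$ in probability. For any non-signal edge $(u',v')\notin\mc{S}$ the null $p_{u'v'|0}=p_{u'v'|1}$ holds, so $T_{u'v'}^{(n)}$ is a valid (super-uniform) $p$-value whose distribution does not concentrate at zero; in particular for every $\eta>0$ there is $\delta>0$ with $\liminf_n \PP(T_{u'v'}^{(n)}>\delta)\ge 1-\eta$.

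Next I would combine these two facts through a union bound. Because $|\mc{E}|\le\binom{V}{2}$ is finite and fixed, for any $\eta>0$ I can choose $\delta>0$ and $N$ such that, for all $n\ge N$,
\begin{equation}
\PP\bigl(\max_{(u,v)\in\mc{S}} T_{uv}^{(n)} < \delta/2 \;\;\text{and}\;\; \min_{(u',v')\notin\mc{S}} T_{u'v'}^{(n)} > \delta\bigr) > 1-\eta.
\end{equation}
On this event the full separation $T_{uv}^{(n)}<T_{u'v'}^{(n)}$ holds for every signal/non-signal pair. For the incoherent estimator this immediately implies that the $s'\ge s=|\mc{S}|$ edges of smallest rank include all of $\mc{S}$, so $\mc{S}\subseteq\mhc{S}_n(s')$ on that event.

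For the coherent estimator I would argue at the vertex level. Take any signal-vertex cover $\mc{U}$ with $|\mc{U}|\le m\le m'$. On the separation event above, every signal vertex $v\in\mc{U}$ has vertex score $w_{v;c}$ at least equal to the number of signal edges incident to it, while every non-signal vertex has $w_{v;c}=0$ for any threshold placed in $(\delta/2,\delta)$. Hence when the algorithm's critical value $c$ first reaches this window, the top $m'$ vertices include $\mc{U}$, the constraint in Eq.~\ref{eq:coh} is satisfied by a set of edges that contains $\mc{S}$, and the $s'\ge s$ chosen edges include every signal edge. Sending $\eta\to 0$ concludes the argument.

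The main obstacle I anticipate is the first step: justifying rigorously that under the null Fisher's exact test is non-degenerate and under the fixed alternative it decays fast enough. Because the contingency tables are discrete (Bernoulli), one must be careful that the $p$-value is not identically zero or identically one in degenerate regimes (e.g.\ when every observation is a $1$). These corner cases vanish with probability tending to one once $n$ is large, since by the law of large numbers both classes observe each edge and its complement eventually, but making that precise, and pointing to the standard consistency statements for Fisher's test (e.g.\ \cite{Rice1995}), is the one step that needs the most care.
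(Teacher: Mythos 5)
Your proposal is correct and rests on exactly the same dichotomy as the paper's proof: under the alternative the Fisher $p$-value converges to zero, under the null it stays non-degenerate (asymptotically uniform), and finiteness of $\mc{E}$ then yields separation of signal from non-signal edges. You simply carry the argument further than the paper does, by making the union bound explicit and by checking that the separation event suffices for the coherent (vertex-score) greedy procedure as well as the incoherent one, which the paper's two-sentence proof leaves implicit.
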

\begin{proof}
	Whenever $p_{uv|0}\neq p_{uv|1}$, the p-value of Fisher's exact test converges to zero; whereas whenever $p_{uv|0}=p_{uv|1}$, the distribution of p-values converges to the uniform distribution on $[0,1]$.  Therefore, Fisher's exact test induces a consistent estimator of the signal-subgraph as $n \conv \infty$, assuming a fixed and finite $V$.  Moreover, as $V \conv \infty$, as long as $V/n \conv 0$, Fisher's exact test remains consistent \cite{Rice1995}. 
\end{proof}
}
While most powerful, computing Fisher's exactly is computationally taxing.  Fortunately, the chi-squared test is asymptotically equivalent to Fisher's test, and therefore shares those convergence properties \cite{Rice1995}.  Even the absolute difference of MLE's, $|\wh{p}_{uv|1}^{MLE}-\wh{p}_{uv|0}^{MLE}|$, which is trivially easy to compute, is asymptotically equivalent to Fisher's \cite{Rice1995} and therefore consistent. 
\comment{The implications of the above convergence properties are that any incoherent signal-subgraph estimated using a consistent test statistic is a consistent signal-subgraph estimator.}  Moreover, the signal-subgraph estimators are robust to a variety of model misspecifications.  Specifically, as long as all the marginal probability of all the edges in the signal-subgraph are different between the two classes, $p_{uv|1}\neq p_{uv|0}$,  and the constraints are upper-bounds on the true values, $s' \geq s$ and $m' \geq m$, 
then any consistent test statistic will yield a consistent signal-subgraph estimator.  \comment{For example, when the signal-subgraph is coherent, even if $m$ is unknown, the incoherent signal-subgraph estimator will converge to the truth.  More generally, even if the independent edge assumption is not satisfied, the incoherent estimator will converge to the truth.}
\comment{Moreover, the coherent signal-subgraph estimator uses the same test statistics.  Thus, it shares the above consistency and robustness properties. } 
Estimating the coherent signal-subgraph is more computationally time consuming than estimating the incoherent signal-subgraph.
What is lost by computational time, however, is typically gained by finite sample efficiency whenever the model  \comment{is coherent} \tk{does not induce too much bias}, as will be shown below.

\subsection{Bayes Plugin Classifier}

\tk{\begin{Lem}
	The Bayes plug-in classifier, using the signal-subgraph, likelihood, and prior estimators described above, is consistent under the model defined by Eq. \ref{eq:model}.
\end{Lem}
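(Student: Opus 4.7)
The plan is to show that the Bayes plug-in classifier $\wh{h}_n$ has risk converging in probability to the Bayes-optimal risk $L_*$. The argument decomposes into three ingredients, corresponding to the three estimators composing $\wh{h}_n$, followed by a continuity step that converts parameter convergence into risk convergence.

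First, I would invoke the previous lemma to conclude $\mhc{S}_n = \mc{S}$ eventually with probability one. Since Fisher's p-values for signal edges converge to zero while p-values for non-signal edges are asymptotically Uniform$[0,1]$, the $s$ most significant edges coincide with $\mc{S}$ for all $n$ large enough. Assuming the hyperparameters $(s,m)$ are either known or correctly selected by the cross-validation procedure of Algorithm \ref{alg:3}, we can work on the event $\{\mhc{S}_n = \mc{S}\}$, whose probability tends to one.

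Second, on this event I would establish consistency of the remaining estimators. The prior estimator $\wh{\pi}_y = n_y/n$ is the MLE of a Bernoulli proportion and converges almost surely by the SLLN. For $\wh{p}_{uv|y}$, because $p_{uv|y} \in (0,1)$, both edge values are observed in class $y$ infinitely often almost surely; hence the thresholded branches of Eq.~\ref{eq:lik} become inactive for all $n$ large enough, and $\wh{p}_{uv|y}$ coincides with the MLE, which is itself consistent.

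Third, I would apply a continuity argument. Because $|\mc{S}|$ is finite, the plug-in posterior
\begin{equation*}
\prod_{(u,v) \in \mhc{S}_n} \wh{p}_{uv|y}^{\,a_{uv}}(1-\wh{p}_{uv|y})^{1-a_{uv}}\, \wh{\pi}_y
\end{equation*}
is a continuous function of finitely many parameters that all converge in probability to their true values. The continuous mapping theorem then yields pointwise convergence to the true posterior $F_{Y=y \mid \GG=G}$, and a standard plug-in-classifier consistency argument gives $R(F,\wh{h}_n) \to L_*$ in probability.

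The main obstacle is the first step: controlling the discrete, combinatorial estimator $\mhc{S}_n$. Convergence in probability for the full parameter $\bth$ is not automatic from a consistent test-based ordering, because ties between signal and non-signal edges could in principle persist. However, p-values for signal edges collapse to zero strictly faster than the bounded-away-from-zero non-signal p-values, so the separation event holds eventually with probability one. Conditional on $\mhc{S}_n=\mc{S}$, the remaining steps reduce to consistent Bernoulli parameter estimation on a finite index set, which is essentially routine.
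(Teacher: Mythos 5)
Your argument is correct and follows the same route as the paper: the paper's proof simply cites the general fact that a Bayes plug-in classifier is consistent whenever the plugged-in estimates are consistent, and notes that the signal-subgraph, likelihood, and prior estimators are all consistent; you have unpacked that citation into the explicit three-part decomposition plus a continuous-mapping step. The only imprecision is your phrase ``bounded-away-from-zero non-signal p-values'' (they are asymptotically Uniform on $[0,1]$, as you yourself state earlier), but since only convergence in probability of $\mhc{S}_n$ is needed, this does not affect the conclusion.
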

}
 
\begin{proof}
A Bayes plugin classifier is a consistent classifier whenever the estimates that are plugged in are consistent \cite{Bickel2000}.  Because the likelihood, prior, and signal-subgraph estimates are all consistent, the Bayes plugin classifier is also consistent.  
\end{proof}

Note that na\"ive Bayes classifiers often exhibit impressive finite sample performance due to their winning the bias-variance trade-off relative to other classifiers \cite{Hand2001}.  In other words, even when edges are highly dependent, because marginal probability estimates are more efficient than joint probability estimates, an independent edge based classifier will often outperform a classifier based on dependencies.


\section{Simulated Experiments} 
\label{sub:subsection_name}

\subsection{Simulation Details} 
\label{sub:simulation_details}

To better assess the finite sample properties of the signal-subgraph estimators, we conduct a number of simulated experiments.  Consider the following \emph{homogeneous} model: each simple graph has $V=70$ vertices.  Class 0 graphs are Erdos-Renyi with probability $p$ for each edge; that is, $f_{uv|0}=p \, \forall \, (u,v) \in \mc{E}$.  Class 1 graphs are a mixture of two Erdos-Renyi models: all edges in the \emph{signal-subgraph} have probability $q$, and all others have probability $p$, so that $f_{uv|1}=q \, \forall (u,v) \in \mc{S}$, and $f_{uv|1}=p \, \forall (u,v) \in \mc{E} \backslash \mc{S}$.  The signal-subgraph is constrained to have $m$ signal-vertices and $s$ signal-edges.  Let the class-prior probabilities be given by $F_{Y=0}=\pi$ and $F_{Y=1}=1-\pi$. Thus, the model is characterized by $F_{\bth}=\mc{M}_V(m,s; \pi,p,q)$, where $V$ is a constant, $m$ and $s$ are hyper-parameters, and $\pi$, $p$ and $q$ are parameters.

\subsection{A Simple Demonstration} 
\label{sub:a_simple_demonstration}
  

To provide some insight with respect to the finite sample performance of the incoherent and coherent signal-subgraph estimators for this model, we run the following simulated experiments, with results depicted in  Figure \ref{fig:4x4}.  In each row we sample from $\mc{M}_{70}(1,20;0.5,0.1,0.3)$  (note that we are actually conditioning on the class-conditional sample size).  Given these $n$ samples, we compute the significance matrix (first column), which contains the sufficient statistics for both estimators.  The incoherent estimator simply chooses the $s$ most significant edges as the signal-subgraph (second column). The coherent estimator jointly estimates both the $m$ signal-vertices and the $s$ signal-edges incident to at least one of those vertices (third column).  The coherogram shows the ``coherency'' of the data (fourth column).    

From this figure, one might notice a few tendencies.  First, both the incoherent and coherent signal-subgraph seem to converge to the true signal-subgraph.  Second, while 
both estimators perform poorly with $n < 16$, 
the coherent estimator converges more quickly than the incoherent estimator.  Third, the coherogram sharpens with additional samples,  showing after only approximately 50 samples that this model is strongly coherent.

\subsection{Quantitative Comparisons} 
\label{sub:quant}

To better characterize the relative performance of the two signal-subgraph estimators, Figure \ref{fig:homo} shows their performance as a function of the number of training samples, $n$, for the $\mc{M}_{70}(1,20;0.5,0.1,0.3)$ model.  The top panel shows the mean and standard error of the missed-edge rate---the fraction of edges incorrectly identified\tk{---averaged over 200 trials}.  For essentially all $n$, the coherent estimator (black \tk{solid} line) performs better than the incoherent estimator (gray \tk{solid} line).  \tk{We also compare the performance of an $\ell_1$-penalized logistic regression classifier  (`lasso' hereafter \cite{Tibs96}).  As expected, the missed edge rate for the lasso (gray dashed line) and the incoherent classifier are about the same. The improvement in signal-edge detection of the coherent signal-subgraph estimator over the incoherent's and lasso's performance} \comment{This} translates directly to improved classification performance (middle panel), where the plugin classifier using the coherent signal-subgraph estimator has a better misclassification rate than \tk{either} the incoherent signal-subgraph classifier \tk{and the lasso} for essentially all $n$. \tk{Note that the incoherent classifier also admits better performance than the lasso.  This is expected---although they are very similar---the incoherent classifier was derived specifically for this joint graph/class model}.  For comparison purposes, the na\"ive Bayes plugin classifier; that is, the classifier that assumes the whole graph is the signal-subgraph, is also shown (\tk{black dashed} \comment{light gray} line).  Note that the performance of all the classifiers is bounded above by $L_{\mhb{\pi}} = 0.5$ and below by $L_* = 0.13$.  Moreover,  $\wh{L}_{nb} > \wh{L}_{lasso} > \wh{L}_{inc} > \wh{L}_{coh}$ for essentially all $n$.  

\tk{An important aspect of any algorithm is compute time, both of training and testing. The signal-subgraph classifiers that we developed are very fast.  Computations essentially amount to computing a test-statistic for all $|\mc{E}|$ edges, then sorting them.  The parameter estimates of the likelihood and prior terms come directly from the same test-statistics used to obtain the significance of each edge.  Thus, obtaining those estimates amounts to essentially computing a mean.  On the other hand, the lasso classifier, which yields \emph{worse} signal detection and misclassification rates than both our classifiers, requires an iterative algorithm for each value on the hyper-parameter path \cite{Tibs96}.  Despite that efficient computational schemes have been developed for searching the whole regularization path \cite{Efron2004a}, such iterative algorithms should be much slower than our classifiers. }

\tk{Indeed, the lower panel of Figure \ref{fig:homo} demonstrates that our MATLAB implementation of the signal-subgraph classifiers are approximately 10 times faster than MATLAB's lasso implementation.  All the results shown in Figure \ref{fig:homo} include errorbars computed from 100 trials, each with 100 held-out samples, demonstrating that for these simulation parameters, the differences are highly significant.  Although the quantitative results may vary for different implementations and different parameter settings, our expectation is that the qualitative results should be consistent.  Because our classifiers have lower risk, better signal identification, and run an order of magnitude faster than the standard, we do not consider lasso in further simulations.}

\begin{figure}[htbp]
	\centering
		\includegraphics[width=1.0\linewidth]{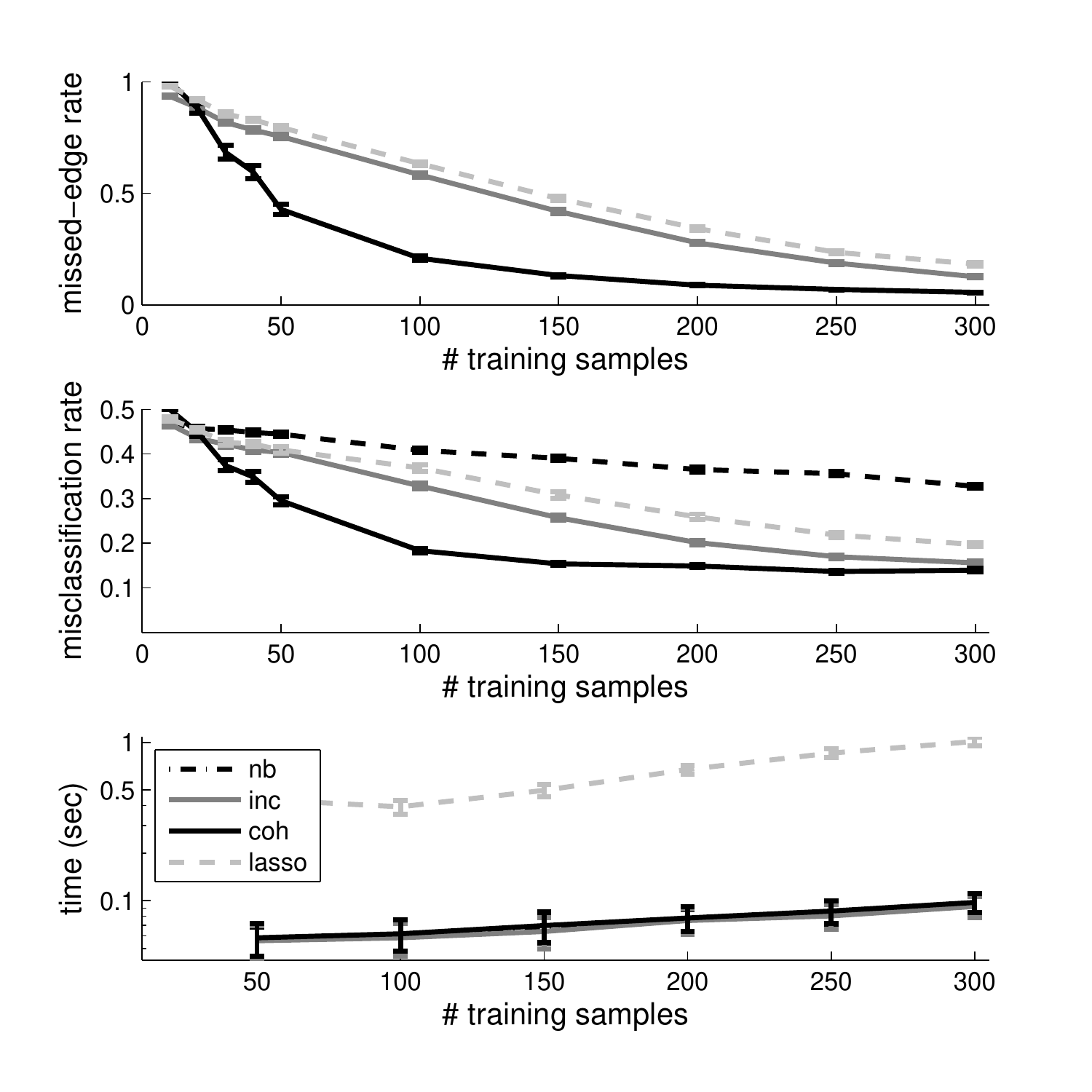}
	\caption{Performance statistics as a function of sample size demonstrate that the coherent signal-subgraph estimator outperforms the incoherent signal-subgraph estimator, in terms of both the signal-subgraph identification and classification, for nearly all $n$, using the same model as in Figure \ref{fig:4x4}: $\mc{M}_{70}(1,20;0.5,0.1,0.3)$.  \tk{Moreover, even the incoherent classifier outperforms the $\ell_1$-penalized logistic regression (lasso) on all our metrics.}	The top panel shows the missed-edge rate for each estimator as a function of the number of training samples, $n$.  The \tk{middle} \comment{bottom} panel shows the corresponding misclassification rate for the estimators, as well as the na\"ive Bayes plugin classifier.  Performance of all estimators improves (nearly) monotonically with $n$ for both criteria. \tk{The bottom panel shows total training and testing time for each classifier.  Clearly, the lasso is about 10 times slower than the others.} Error bars show standard error of the mean here and elsewhere \tk{unless otherwise noted} (averaged over 100 trials; each trial used 100 samples for held-out data). 
	\tk{Error bars on the lower panel show the inter-quartile range.}
	Note that for most values of $n$, we have $L_{\mhb{\pi}} > \wh{L}_{nb} > \tk{\wh{L}_{lasso} >} \wh{L}_{inc} > \wh{L}_{coh} > L_*$. Legend: ``inc'': incoherent; ``coh'': coherent; ``nb'': na\"ive Bayes\tk{, ``lasso'': lasso}.}
	\label{fig:homo}
\end{figure}

The above numerical results suggest that the coherent estimator \tk{achieves better signal-subgraph identification and classification performance than} \comment{outperforms} the incoherent estimator almost always\tk{, despite that the computational time of the coherent classifier is almost identical}.  However, that result is a function of both the model $\mc{M}_V$ (which includes the number of vertices), and the number of training samples $n$ \tk{(there is a bias-variance trade-off here, as always)}.  
Figure \ref{fig:RE} explicitly shows that the relative performance of an estimator for a particular model---$\mc{M}_{30}(1,5;0.5,0.1,0.2)$---changes as a function of the number of samples.  More specifically, for small $n$, the incoherent estimator yields better performance, as indicated by the relative rate and relative efficiency being above one.  However, with more samples, when the signal-subgraph is coherent, the coherent estimator will eventually outperform the incoherent one.  At infinite samples, since both estimators are consistent, they will yield identical results: the truth.  

Thus, to choose which estimator will likely achieve the best performance, knowledge of the model, $\mc{M}_V(m,s;\pi,p,q)$, is insufficient; rather, both the model and the number of samples must be known a priori.  

\begin{figure}[htbp]
	\centering
		\includegraphics[width=0.8\linewidth]{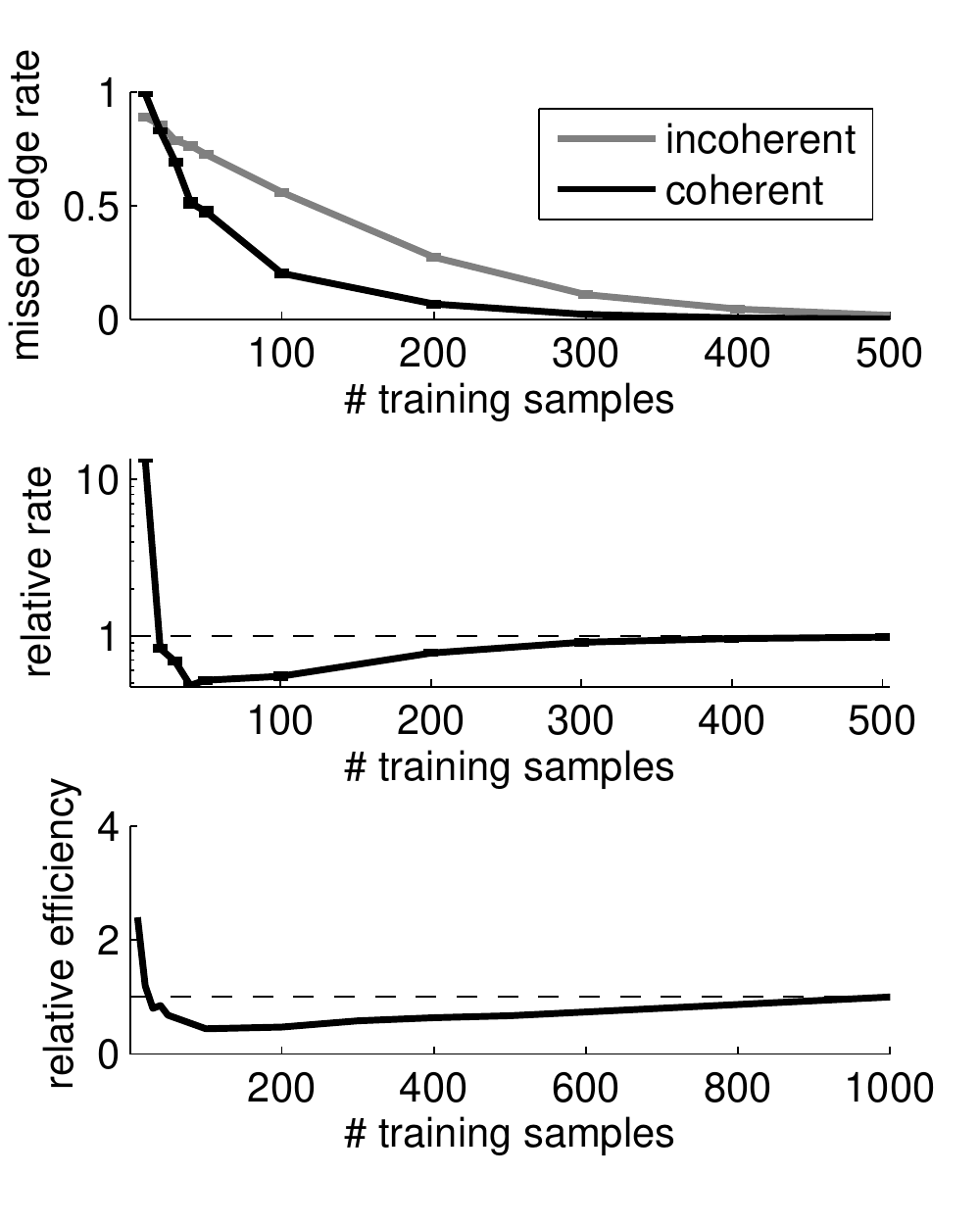}
	\caption{The relative performance of the coherent and incoherent estimators is a function not just of the model, but also the number of training samples.  Specifically, for the same model, $\mc{M}_{70}(1,20;0.5,0.1,0.3)$, we compute the missed-edge rates for both the incoherent estimator (gray line) and the coherent estimator (black line)\tk{, averaged over 200 trials}.  The top panel shows that for small training sample size the incoherent estimator achieves a better (lower) missed-edge rate than the coherent estimator. However, the incoherent estimator's convergence rate is slower, and the coherent estimator catches up and outperforms the incoherent estimator until both eventually converge at the truth.  The middle and bottom panels show the relative rate and efficiency curves for this model. Note that the curves dip below unity, and then converge to unity, as they must, because both estimators are consistent. }
	\label{fig:RE}
\end{figure}

\subsection{Estimating the Hyper-Parameters} 
\label{sub:estimating_the_hyper_parameters}

In the above analyses the hyper-parameters, both the number of signal-edges $s$ and signal-vertices $m$, were known.  In practice while one might have a preliminary guess of the range of these hyper-parameters, the optimal values will usually be unknown.  We can therefore use a cross-validation technique to search over the space of all reasonable combinations of $s$ and $m$, and choose the best performing combination.  Figure \ref{fig:coherent} shows one such simulation depicting several key features.  The top panel shows the misclassification rate \tk{on held-out data} as a function of the log of the assumed size of the signal-subgraph for the incoherent classifier.  Although the true size is $s=20$, the best performing estimate is $\wh{s}_{inc}=23$. This is a relatively standard result in model selection: the best performer will include a few extra dimensions because adding a few uninformative features is less costly than missing a few informative features \cite{Jain2000}.  This intuition is further reified by the U-shape of the misclassification curve on a log scale: including many non-signal-edges is less detrimental than excluding a few signal-edges.

The bottom panel shows the coherent performance by varying both $m$ and $s$, which exhibits a ``banded''  structure, indicating that the performance is relatively robust to small changes in $m$.  \tk{This banding likely results from the fact that the test statistics are identical for many edges, so therefore minor changes in the number of allowable edges is not expected to change performance much.}
The best performing pair achieved $\wh{L}_{coh}=0.13$ (which is equal to the Bayes error) with $\wh{m}_{coh}=1$ and $\wh{s}_{coh}=24$, suggesting that $n$ was sufficiently large to correctly find the true signal-vertex, and further corroborating the ``better safe than sorry'' attitude to selecting the signal-edges. 

\begin{figure}[htbp]
	\centering
		\includegraphics[width=0.8\linewidth]{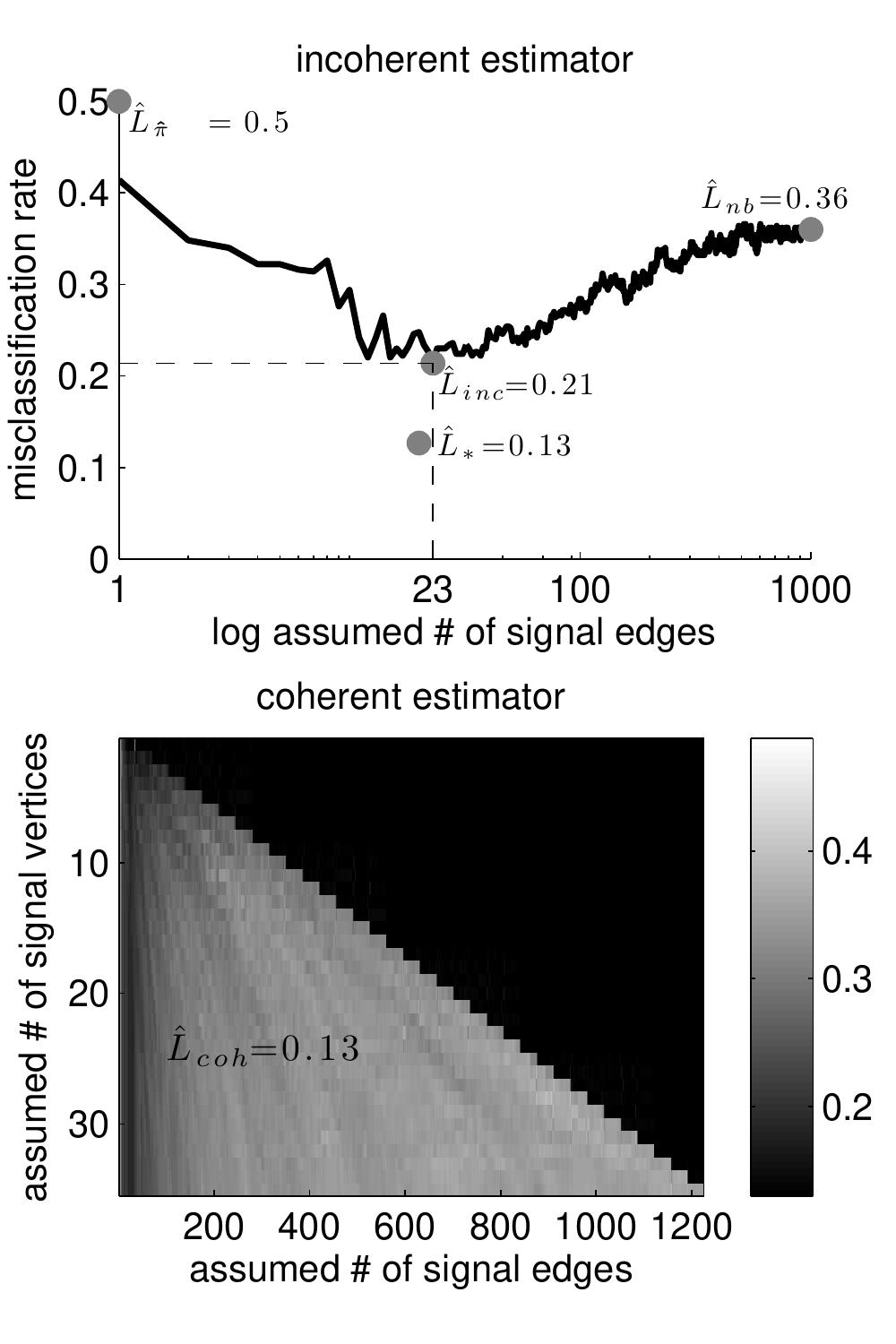}
	\caption{When constraints on the number of signal-edges ($s$) or signal-vertices ($m$) are unknown, a search over these hyperparameters can yield estimates $\wh{s}$ and $\wh{m}$.  Both panels depict held-out cross-validation error as a function of varying these parameters for the model 
	$\mc{M}_{70}(1,20;0.5,0.1,0.3)$
	(the same as in Figures \ref{fig:4x4} and \ref{fig:homo}), using 200 training samples and 500 test samples, with $m=1$ and $s=20$.  The top panel depicts misclassification rate of the incoherent estimator as a function of the number of estimated signal-edges on a log scale, with the best performing classifier achieving $\wh{L}_{inc}=0.21$. Note that in this simulation,  $s=20< \wh{s}_{inc}=23$.  This ``conservatism'' is typical and appropriate in many model selection situations.  The bottom panel shows $\wh{L}_{coh}$ as a function of both $m'$ and $s'$.  For this simulation, $\wh{m}_{coh}=1$ and $\wh{s}_{coh}=24$, further corroborating the conservative stance on model selection. Note that $L_{\mb{\pi}} > \wh{L}_{nb} > \wh{L}_{inc} > \wh{L}_{coh} \geq L_*$ as one would hope for this coherent simulation.  Incidentally, the coherent classifier achieved Bayes error here, $L_*=0.13$.}
	\label{fig:coherent}
\end{figure}


\section{MR Connectome \tk{Sex} Classification} 
\label{sub:sex}

A connectome is brain-graph \cite{Sporns2010}.  MR connectomes utilize multi-modal Magnetic Resonance (MR) imaging to determine both the vertex and edge set for each individual \cite{Hagmann2010}.  This section investigates the utility of the classifiers developed above on data collected for the Baltimore Longitudinal Study of Aging, as described previously \cite{OHBM10}.  Briefly, 49 subjects (25 male, 24 female) underwent a diffusion-weighted MRI protocol. The Magnetic Resonance Connectome Automated Pipeline (MRCAP) was used to convert each subject's raw multi-modal MR data into a connectome \cite{MRCAP11} (each connectome is a simple graph with 70 vertices and up to $\binom{70}{2}=2415$ edges). Lacking strong priors on either the number of signal-edges or signal-vertices in the signal-subgraph (or even whether a signal-subgraph exists), we searched over a large space of hyper-parameters using \tk{leave-one-out} cross-validated misclassification performance as our metric of success (Figure \ref{fig:data}).  The na\"ive Bayes classifier---which assumes the signal-subgraph is the whole edge set, $\mhc{S}_{nb}=\mc{E}$---performs marginally better than chance: $\wh{L}_{nb}=0.41$ (p-value $\approx 0.05$ assessed by a permutation test).  With a relatively small number of incoherent edges---$\wh{s}_{inc}=10$---the incoherent classifier (top left panel) achieves $\wh{L}_{inc}=0.27$, significantly better than chance (p-value $<0.0007$), but not significantly better than the na\"ive Bayes classifier (using McNemar's test).  The coherent classifier achieved a minimum of $\wh{L}_{coh}=0.16$ (top right and middle panels), 
\comment{not significantly better than the incoherent classifier, but }
significantly better than both chance and the na\"ive Bayes classifier (p-values $<10^{-5}$ and $<0.004$, respectively).  This improved performance upon using the coherent classifier suggests that the signal-subgraph is at least approximately coherent. Using $\wh{m}_{coh}=12$ and $\wh{s}_{coh}=360$ from the best performing coherent classifier, we can estimate the signal-subgraph (bottom left).  The coherogram suggests that indeed, the signal is somewhat, but not entirely coherent (bottom right).

\tk{We next compare the performance of our classifiers on this MR connectome sex classification data set to several other classifiers. First, a standard parametric classifier: lasso.  We chose the regularization parameter via a 10-fold cross-validation.  Second, a non-parametric (distribution free) classifier: $k_n$-nearest neighbor ($k$NN), which operates directly on graphs \cite{VP11_super}.  This $k$NN classifier uses the Frobenius norm distance metric.  We tried all $k \in [n]$ and simply report the best performance. The universal consistency of this $k$NN classifier is useful in assessing the algorithm complexity supported by this data.  In particular, given enough samples, $k$NN will achieve optimal performance.  Less than optimal performance therefore indicates that the sample size is not sufficiently large for this $k$NN classifier.  Third, a graph invariant based classifier.  We computed six graph invariants for each graph: size, max degree, scan statistic, number of triangles, clustering coefficient, and average path length, normalized each to have zero mean and unit variance, and then used a $k$NN with $\ell_2$ distance metric on the invariants. These particular invariants were chosen based on their desirable statistical properties \cite{PCP10, priebe2010you, Rukhin2011}.} 

\tk{Despite the small sample size, Table \ref{tab:bakeoff} demonstrates that the signal-subgraph classifier is significantly better than all the others, as assessed via a one-sided McNemar's test.}  

\begin{table}
	\caption{\tk{Bake-off comparing a number of different classifiers on the MR connectome sex classification data.  Error indicates misclassification error using the best hyper-parameters found for each classifier. P-value indicates the p-value of a one-sided McNemar's test comparing each classifier to the best signal-subgraph classifier.  The signal-subgraph classifier is significantly better than all the others.}  }
	\begin{center}
\begin{tabular}{|c|c|c|}
	\hline
classifier & error & p-val \\
\hline \hline
prior 			& 0.50 & $<0.01$ \\ \hline
na\"ive Bayes 	& 0.41 & $<0.01$ 		\\ \hline
lasso 			& 0.27 & $<0.02$ 				  	\\ \hline
graph-$k$NN 	& 0.35 & $<0.02$ 	\\ \hline
invariant-$k$NN & 0.43 & $<0.01$		\\ \hline
signal-subgraph 		& \textbf{0.16} & n$\backslash$a \\ \hline
\end{tabular}
\end{center}
\label{tab:bakeoff}
\end{table}

\begin{figure}[htbp]
	\centering
		\includegraphics[width=1.0\linewidth]{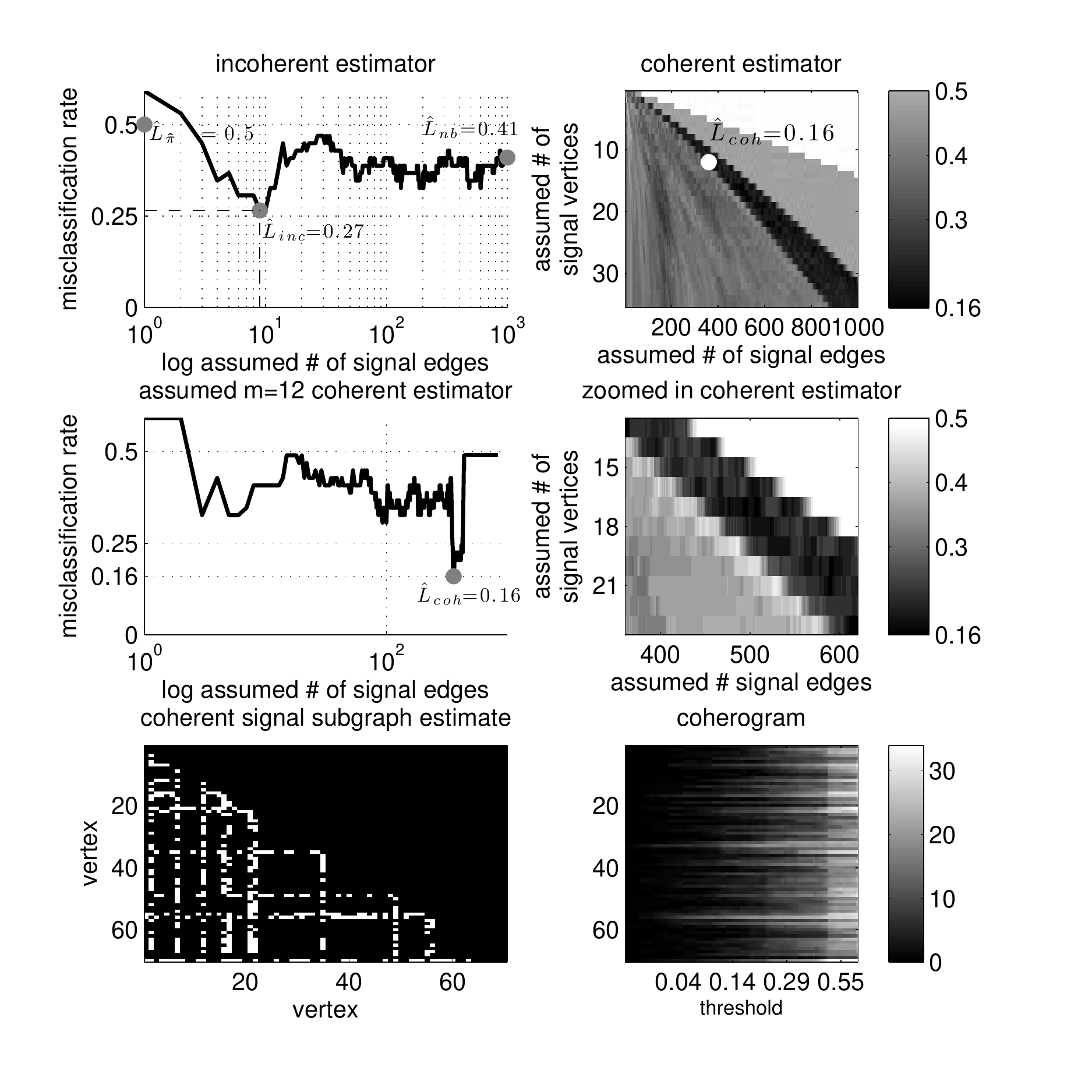}
	\caption{MR connectome sex signal-subgraph estimation and analysis. By cross-validating over hyperparameters and models, we estimate that the ``best'' incoherent signal-subgraph (for this inference task on these data) has $\wh{s}_{inc}=10$ and yields a misclassification rate of $\wh{L}_{inc}=0.27$, whereas the best coherent signal-subgraph has $\wh{m}_{coh}=12$ and $\wh{s}_{coh}=360$, achieving $\wh{L}_{coh}=0.16$.  The top two panels depict the same information as Figure 5.  The middle two depict misclassification rate (left) for different choices of $m'=12$ as a function of $s'$ and (right) a zoomed-in depiction of the top right panel. The bottom left panel shows the estimated signal-subgraph, and the bottom right shows the coherogram.  Together, these bottom panels suggest that the signal-subgraph for these data is at least somewhat coherent.}
	\label{fig:data}
\end{figure}


\subsection{Model Evaluation} 
\label{sub:model_checking}

\comment{Although the signal-subgraph classifier employed above estimated a signal-subgraph, w}We investigate to what extent the \tk{above} estimated signal-subgraph represents the true signal-subgraph.  We address this question in two ways:  (i) synthetic data analysis and (ii) assumption checking.  

\subsubsection{Synthetic Data Analysis} 
\label{ssub:synthetic_data_analysis}

For the synthetic data analysis, we generated data as follows.  Given the above estimated signal-subgraph, for every edge not in $\mhc{S}_n$, let $p_{uv|0}=p_{uv|1}=\wh{p}_{uv}$, where $\wh{p}_{uv}$ is the estimated edge probability averaging over all samples.  For all edges in $\mhc{S}_n$, let $p_{uv|y}=\wh{p}_{uv|y}$.  Set the priors according to the data as well: $\pi=\wh{\pi}$.  

\begin{figure*}[tb!]
	\centering
		\includegraphics[width=0.7\linewidth]{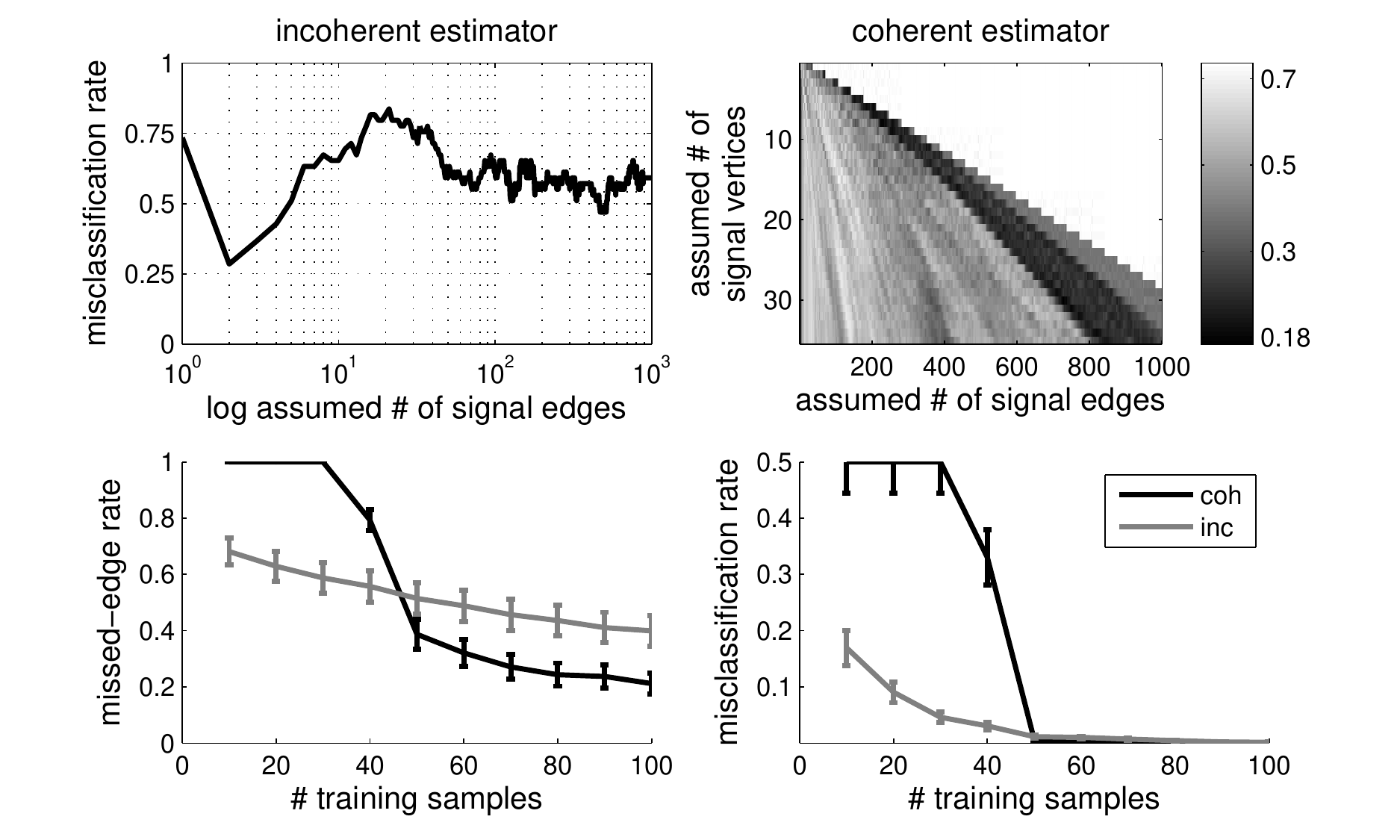}
	\caption{Synthetic data analysis provides some intuition for model checking and future improvements.  The top two panels show the incoherent (left) and coherent (right) misclassification rates as a function of the hyper-parameter choices for $n=49$.  These plots look quite similar to those obtained in the real connectome data (Figure \ref{fig:data}), which suggests that the chosen model may be adequate.  The bottom panels show the missed-edge rate (left) and misclassification rate (right) as a function of the number of training samples.  With about 50 training samples, approximately half of the edges identified by each classifier are true edges.  Additionally, slightly more than 50 training samples seems to be sufficient for obtaining nearly perfect classification, suggesting that perhaps only a few more subjects would be sufficient to yield much greater classification performance.}
	\label{fig:synthetic}
\end{figure*}

Given this synthetic data model, we first generated 49 data samples, 25 from class 0 and 24 from class 1, and estimated the incoherent and coherent classifier performance on a single synthetic experiment (Figure \ref{fig:synthetic}, top panels).  The performance of the classifiers on the synthetic data qualitatively mirrors that of the real data, suggesting some degree of model appropriateness.  To assess what fraction of the edges in the estimated signal-subgraph were reliable, even assuming a true model, we then sampled up to 100 training samples (and 100 test samples), and computed the missed-edge rate (bottom left) and misclassification rate (bottom right) as a function of the number of samples.  Given approximately 50 samples, the incoherent signal-subgraph estimator correctly identifies about $40\%$ of the edges, whereas the coherent signal-subgraph estimator correctly identifies about $50\%$.  This suggests that even if the model were true (which we doubt) we are justified to believe that only about half the edges in the estimated signal-subgraph are in the actual signal-subgraph.  
\tk{Despite  our stated desideratum of interpretability of the resulting classifier in terms of correctly identifying the signal-edges and vertices, for data sampled from this assumed distribution, sample sizes of $<50$ seem to be insufficient.  That said, }
\comment{Moreover,} both missed-edge rate and misclassification rate exhibit a step-like function in performance: after about 50 samples, performance dramatically improves.  This suggests that perhaps only a few more data points would be necessary to obtain greatly improved classification accuracy.

\subsubsection{Model Checking} 
\label{ssub:model_checking}


The assumption of independence between edges is (i) very useful for algorithms and analysis, and (ii) almost certainly nonsense for real connectome data.  Checking whether edges are independent is relatively easy.  Figure \ref{fig:cov} shows the correlation coefficient between all pairs of edges in the estimated signal-subgraph from the neurobiological data.  We used a spectral clustering algorithm \cite{Dhillon2001} to  more clearly highlight any significant correlations.  Several groups of edges seem to be highly correlated.  To assess significance, we compare the distribution of correlation coefficients with the distribution of correlation coefficients obtained from the synthetic data analysis.  A two-sample Kolmogorov-Smirnov test shows that the two matrices are significantly different (p-value $\approx 0$), rejecting the null hypothesis that the edges in the real data are independent. This analysis further corroborates that making independence assumptions can be fruitful even when the data are dependent \cite{Hand2001}.

%
%
%
%
%

\begin{figure}[htbp]
	\centering
		\includegraphics[width=1.0\linewidth]{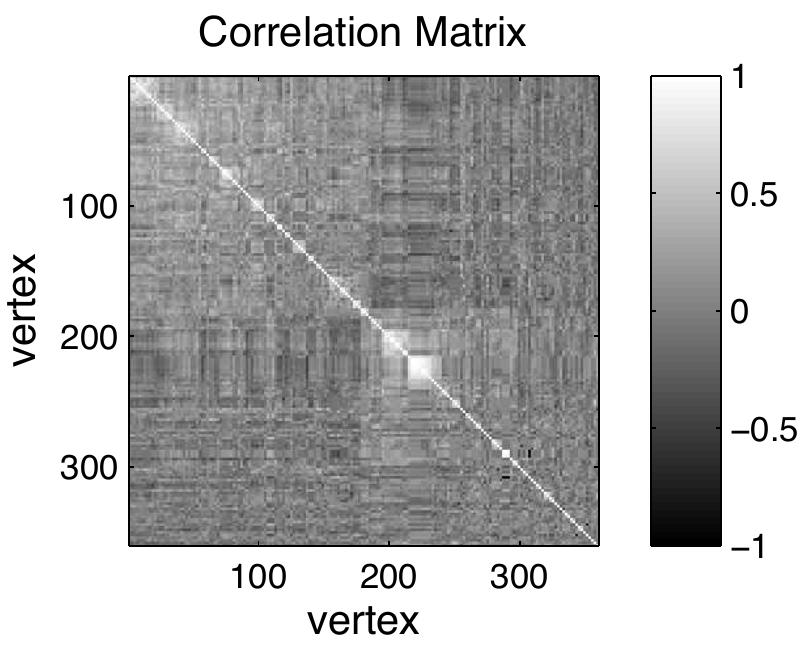}
	\caption{The correlation matrix between all the edges in the coherent signal-subgraph estimate. Edges are organized by co-clustering to highlight any similarities.  Although most edges are uncorrelated, several groups of edges cluster, indicative of the fact that the edges are not independent (p-value of $\approx 0$ using a two-sample Kolmogorov-Smirnov test comparing the real and synthetic correlation matrices). }
	\label{fig:cov}
\end{figure}





\section{Discussion} 
\label{sec:discussion}

This work makes the following contributions. First, it introduces a novel graph/class model that admits rigorous statistical investigation.  Moreover, it presents two approaches for estimating the signal-subgraph: the first using only vertex label information, the second also utilizing graph structure.  The resulting estimators have desirable asymptotic and finite sample properties, including consistency and robustness to various model misspecifications.  Third, simulated data analysis indicate that neither approach dominates the other; rather, the best approach is a function of both the model and the amount of training data. \tk{And while the lasso classifier has similar error properties to our incoherent classifier, lasso's computational time is about an order of magnitude longer.}
Fourth, these classifiers are applied to \tk{an MR connectome sex classification} \comment{a connectome} data set; the coherent classifier performs significantly better than \comment{both na\"ive Bayes and incoherent classifiers} \tk{a variety of benchmark classifiers}.  Fifth, synthetic data analysis suggests that while we can use the signal-subgraph estimators to improve classification performance, we should not expect that all the edges in the estimated signal-subgraph will be the true signal-edges, even when the model is correct. Moreover, we might expect a drastic improvement in classification performance with only a few additional data samples.  Finally, model checking suggests that the independent edge assumption does not fit the data well.  

\tk{Our signal-subgraph classifiers represent somewhat of a departure from previous work.  Most graph classification algorithms come from the ``structural pattern recognition'' school of thought \cite{Bunke2011}, lacking an explicit statistical model and associated provable properties. On the other hand, most work on ``statistical pattern recognition'' begins by assuming the data to be classified are Euclidean vectors \cite{Devroye1997}.  Our work is a unification of the two.  Moreover, because the sufficient statistics are essentially encoded in a matrix, our work can be related to recent developments in matrix decompositions.  For example, sparse and low-rank matrix decompositions are close in spirit to our coherent signal subgraph estimators \cite{Candes2009b, Ding2011, Chandrasekaran2011}. Note, however, that our coherent estimator is robust to signal-vertices having a subset of its edges highly non-significant; that is, the coherent signal-subgraph estimator can be thought of as a \emph{local} sparse and low-rank decomposition.}

Collectively, the above analyses suggest a number of possible next steps.  First, collect more data.  Second, relax various assumptions, including (i) the independent edge assumption by considering conditionally independent edges \cite{Hoff02,STFP,Fishkind2012}, (ii) binary edge and class assumptions, and (iii) labeled vertices assumption. 
\tk{Specifically, extension to situations for which none of the vertices are labeled \cite{VP11_QAP, VP11_unlabeled},  only some subset of vertices are labeled \cite{VN,BVN}, or data are otherwise errorfully observed \cite{Bock2011}, are all avenues of future investigation.}
Third, transform a number of conjectures that have arisen due to these results into theorems.  For instance, perhaps the misclassification rate is a monotonic function of the missed-edge rate.  Fourth, (Bayesian) model-averaging to combine estimated signal-subgraphs instead of picking one might improve performance (perhaps at the cost of computational resources and interpretability).  

We hope the proposed approaches will yield many applications.  To that end, all the data and code used in this work is available from the author's website, \url{http://jovo.me}.  

%
%
%

%

\ifCLASSOPTIONcompsoc
  \section*{Acknowledgments}
\else
  \section*{Acknowledgment}
\fi

This work was partially supported by the Research Program in Applied Neuroscience. The authors would like to thank Michael Trosset for a helpful suggestion.

\ifCLASSOPTIONcaptionsoff
  \newpage
\fi

\bibliography{library}

\begin{thebibliography}{10}
\providecommand{\url}[1]{#1}
\csname url@samestyle\endcsname
\providecommand{\newblock}{\relax}
\providecommand{\bibinfo}[2]{#2}
\providecommand{\BIBentrySTDinterwordspacing}{\spaceskip=0pt\relax}
\providecommand{\BIBentryALTinterwordstretchfactor}{4}
\providecommand{\BIBentryALTinterwordspacing}{\spaceskip=\fontdimen2\font plus
\BIBentryALTinterwordstretchfactor\fontdimen3\font minus
  \fontdimen4\font\relax}
\providecommand{\BIBforeignlanguage}[2]{{%
\expandafter\ifx\csname l@#1\endcsname\relax
\typeout{** WARNING: IEEEtran.bst: No hyphenation pattern has been}%
\typeout{** loaded for the language `#1'. Using the pattern for}%
\typeout{** the default language instead.}%
\else
\language=\csname l@#1\endcsname
\fi
#2}}
\providecommand{\BIBdecl}{\relax}
\BIBdecl

\bibitem{Bunke2011}
\BIBentryALTinterwordspacing
H.~Bunke and K.~Riesen, ``{Towards the Unification of Structural and
  Statistical Pattern Recognition},'' \emph{Pattern Recognition Letters},
  vol.~33, pp. 811--825, 2011. [Online]. Available:
  \url{http://www.sciencedirect.com/science/article/pii/S0167865511001309}
\BIBentrySTDinterwordspacing

\bibitem{Hagmann2010}
\BIBentryALTinterwordspacing
P.~Hagmann, L.~Cammoun, X.~Gigandet, S.~Gerhard, P.~{Ellen Grant}, V.~Wedeen,
  R.~Meuli, J.-p. Thiran, C.~J. Honey, and O.~Sporns, ``{MR connectomics:
  Principles and challenges},'' \emph{J Neurosci Methods}, vol. 194, no.~1, pp.
  34--45, 2010. [Online]. Available:
  \url{http://www.ncbi.nlm.nih.gov/entrez/query.fcgi?cmd=Retrieve\&db=PubMed\&dopt=Citation\&list\_uids=20096730}
\BIBentrySTDinterwordspacing

\bibitem{Donoho2006}
D.~L. Donoho, M.~Elad, and V.~N. Temlyakov, ``{Stable Recovery of Sparse
  Overcomplete Representations in the Presence of Noise},'' vol.~52, no.~1, pp.
  6--18, 2006.

\bibitem{Candes2008}
E.~J. Cand\`{e}s and M.~Wakin, ``{An introduction to compressive sampling},''
  \emph{IEEE Signal Processing Magazine}, vol.~25, no.~2, pp. 21--30, 2008.

\bibitem{Kudo2005}
T.~Kudo, ``{An Application of Boosting to Graph Classification},''
  \emph{Science}.

\bibitem{Ketkar2009}
\BIBentryALTinterwordspacing
N.~S. Ketkar, L.~B. Holder, and D.~J. Cook, ``{Empirical comparison of graph
  classification algorithms},'' \emph{2009 IEEE Symposium on Computational
  Intelligence and Data Mining}, pp. 259--266, Mar. 2009. [Online]. Available:
  \url{http://ieeexplore.ieee.org/lpdocs/epic03/wrapper.htm?arnumber=4938658}
\BIBentrySTDinterwordspacing

\bibitem{North2007}
\BIBentryALTinterwordspacing
G.~North, \emph{{Invertebrate Neurobiology}}.\hskip 1em plus 0.5em minus
  0.4em\relax CSHL Press, 2007. [Online]. Available:
  \url{http://books.google.com/books?id=6-il0L4vATAC\&pgis=1}
\BIBentrySTDinterwordspacing

\bibitem{Shepherd2007}
\BIBentryALTinterwordspacing
J.~D. Shepherd and R.~L. Huganir, ``{The cell biology of synaptic plasticity:
  AMPA receptor trafficking.}'' \emph{Annual review of cell and developmental
  biology}, vol.~23, pp. 613--43, Jan. 2007. [Online]. Available:
  \url{http://www.ncbi.nlm.nih.gov/pubmed/17506699}
\BIBentrySTDinterwordspacing

\bibitem{Nolte2002}
\BIBentryALTinterwordspacing
J.~Nolte, \emph{{The Human Brain: An Introduction to Its Functional
  Anatomy}}.\hskip 1em plus 0.5em minus 0.4em\relax Mosby, 2002. [Online].
  Available:
  \url{http://www.amazon.com/The-Human-Brain-Introduction-Functional/dp/0323013201}
\BIBentrySTDinterwordspacing

\bibitem{LichtmanSanes08}
\BIBentryALTinterwordspacing
J.~W. Lichtman, J.~Livet, and J.~R. Sanes, ``{A technicolour approach to the
  connectome.}'' \emph{Nat Rev Neurosci}, vol.~9, no.~6, pp. 417--422, Jun.
  2008. [Online]. Available: \url{http://dx.doi.org/10.1038/nrn2391}
\BIBentrySTDinterwordspacing

\bibitem{Bassett2009}
\BIBentryALTinterwordspacing
D.~S. Bassett and E.~T. Bullmore, ``{Human brain networks in health and
  disease},'' \emph{Current Opinion in Neurology}, vol.~22, no.~4, pp.
  340--347, 2009. [Online]. Available:
  \url{http://www.pubmedcentral.nih.gov/articlerender.fcgi?artid=2902726\&tool=pmcentrez\&rendertype=abstract}
\BIBentrySTDinterwordspacing

\bibitem{Lasserre2006}
\BIBentryALTinterwordspacing
J.~Lasserre, C.~Bishop, and T.~Minka, ``{Principled Hybrids of Generative and
  Discriminative Models},'' \emph{2006 IEEE Computer Society Conference on
  Computer Vision and Pattern Recognition - Volume 1 (CVPR'06)}, vol.~1, no.~6,
  pp. 87--94, 2006. [Online]. Available:
  \url{http://ieeexplore.ieee.org/lpdocs/epic03/wrapper.htm?arnumber=1640745}
\BIBentrySTDinterwordspacing

\bibitem{Bickel2000}
P.~J. Bickel and K.~A. Doksum, \emph{{Mathematical Statistics: Basic Ideas and
  Selected Topics, Vol I (2nd Edition)}}.\hskip 1em plus 0.5em minus
  0.4em\relax Prentice Hall, 2000, vol.~1, no. 366.

\bibitem{Stein1956}
C.~M. Stein, ``{Inadmissibility of the usual estimator of the mean of a
  multivariate normal distribution},'' in \emph{Proceedings of the Third
  Berkeley Symposium on Mathematical Statistics and Probability}, vol.~1.\hskip
  1em plus 0.5em minus 0.4em\relax University of California Press, 1956, pp.
  197--206.

\bibitem{Good2010}
\BIBentryALTinterwordspacing
P.~I. Good, \emph{{Permutation, Parametric, and Bootstrap Tests of Hypotheses
  (Springer Series in Statistics)}}.\hskip 1em plus 0.5em minus 0.4em\relax
  Springer, 2010. [Online]. Available:
  \url{http://www.amazon.com/Permutation-Parametric-Bootstrap-Hypotheses-Statistics/dp/1441919074}
\BIBentrySTDinterwordspacing

\bibitem{McNemar1947}
\BIBentryALTinterwordspacing
Q.~McNemar, ``{Note on the sampling error of the difference between correlated
  proportions or percentages},'' \emph{Psychometrika}, vol.~12, no.~2, pp.
  153--157, Jun. 1947. [Online]. Available:
  \url{http://www.springerlink.com/content/843g84t135765212/}
\BIBentrySTDinterwordspacing

\bibitem{Huber1981}
\BIBentryALTinterwordspacing
P.~J. Huber, \emph{{Robust Statistics}}, ser. Wiley Series in Probability and
  Statistics.\hskip 1em plus 0.5em minus 0.4em\relax Wiley, 1981, vol.~82,
  no.~3. [Online]. Available: \url{http://doi.wiley.com/10.1002/9780470434697}
\BIBentrySTDinterwordspacing

\bibitem{Rice1995}
\BIBentryALTinterwordspacing
J.~A. Rice, \emph{{Mathematical statistics and data analysis}}.\hskip 1em plus
  0.5em minus 0.4em\relax Duxbury Press, 1995. [Online]. Available:
  \url{http://www.citeulike.org/user/tarjeiha/article/1691927}
\BIBentrySTDinterwordspacing

\bibitem{Hand2001}
\BIBentryALTinterwordspacing
D.~J. Hand and K.~Yu, ``{Idiot's Bayes: Not So Stupid after All?}''
  \emph{International Journal Statistical Review}, vol.~69, no.~3, pp.
  385--398, Nov. 2001. [Online]. Available:
  \url{http://www.jstor.org/pss/1403452}
\BIBentrySTDinterwordspacing

\bibitem{Tibs96}
R.~Tibshirani, ``{Regression Shrinkage and Selection via the Lasso},''
  \emph{Journal of the Royal Statistical Society. Series B}, vol.~58, pp.
  267--288, 1996.

\bibitem{Efron2004a}
\BIBentryALTinterwordspacing
B.~Efron, T.~Hastie, I.~Johnstone, and R.~Tibshirani, ``{Least Angle
  Regression},'' \emph{Annals of Statistics}, vol.~32, no.~2, pp. 407--499,
  2004. [Online]. Available: \url{http://arxiv.org/abs/math/0406456}
\BIBentrySTDinterwordspacing

\bibitem{Jain2000}
\BIBentryALTinterwordspacing
A.~K. Jain, R.~P.~W. Duin, J.~Mao, and S.~Member, ``{Statistical pattern
  recognition: a review},'' \emph{IEEE Transactions on Pattern Analysis and
  Machine Intelligence}, vol.~22, no.~1, pp. 4--37, 2000. [Online]. Available:
  \url{http://ieeexplore.ieee.org/lpdocs/epic03/wrapper.htm?arnumber=824819}
\BIBentrySTDinterwordspacing

\bibitem{Sporns2010}
\BIBentryALTinterwordspacing
O.~Sporns, \emph{{Networks of the Brain}}.\hskip 1em plus 0.5em minus
  0.4em\relax The MIT Press, 2010. [Online]. Available:
  \url{http://www.amazon.com/Networks-Brain-Olaf-Sporns/dp/0262014696}
\BIBentrySTDinterwordspacing

\bibitem{OHBM10}
J.~T. Vogelstein, W.~R. Gray, J.~L. Prince, L.~Ferrucci, S.~M. Resnick, C.~E.
  Priebe, and R.~J. Vogelstein, ``{Graph-Theoretical Methods for Statistical
  Inference on MR Connectome Data},'' \emph{Organization Human Brain Mapping},
  2010.

\bibitem{MRCAP11}
W.~R. Gray, J.~A. Bogovic, J.~T. Vogelstein, B.~A. Landman, J.~L. Prince, and
  R.~J. Vogelstein, ``{Magnetic Resonance Connectome Automated Pipeline},''
  \emph{submitted}, 2010.

\bibitem{VP11_super}
\BIBentryALTinterwordspacing
J.~T. Vogelstein, R.~J. Vogelstein, and C.~E. Priebe, ``{Are mental properties
  supervenient on brain properties?}'' \emph{Nature Scientific Reports}, vol.
  in press, p.~11, 2011. [Online]. Available:
  \url{http://arxiv.org/abs/0912.1672}
\BIBentrySTDinterwordspacing

\bibitem{PCP10}
\BIBentryALTinterwordspacing
H.~Pao, G.~A. Coppersmith, C.~E. Priebe, H.~P. Ao, G.~A.~C. Oppersmith, and
  C.~E.~P. Riebe, ``{Statistical inference on random graphs: Comparative power
  analyses via Monte Carlo},'' \emph{Journal of Computational and Graphical
  Statistics}, pp. 1--22, 2010. [Online]. Available:
  \url{http://pubs.amstat.org/doi/abs/10.1198/jcgs.2010.09004}
\BIBentrySTDinterwordspacing

\bibitem{priebe2010you}
C.~E. Priebe, G.~A. Coppersmith, and A.~Rukhin, ``{You say graph invariant, I
  say test statistic},'' \emph{Statistical Computing Statistical Graphics
  Newsletter}, vol.~21, no.~2, pp. 11--14, 2010.

\bibitem{Rukhin2011}
A.~Rukhin and C.~E. Priebe, ``{A Comparative Power Analysis of the Maximum
  Degree and Size Invariants for Random Graph Inference},'' \emph{Journal of
  Statistical Planning and Inference}, vol. 141, no.~2, pp. 1041--1046, 2011.

\bibitem{Dhillon2001}
\BIBentryALTinterwordspacing
I.~S. Dhillon, ``{Co-clustering documents and words using bipartite spectral
  graph partitioning},'' \emph{Proceedings of the seventh ACM SIGKDD
  international conference on Knowledge discovery and data mining KDD 01}, vol.
  pages, no. April 2006, pp. 269--274, 2001. [Online]. Available:
  \url{http://portal.acm.org/citation.cfm?doid=502512.502550}
\BIBentrySTDinterwordspacing

\bibitem{Devroye1997}
\BIBentryALTinterwordspacing
L.~Devroye, L.~Gy\"{o}rfi, and G.~Lugosi, \emph{{A Probabilistic Theory of
  Pattern Recognition (Stochastic Modelling and Applied Probability)}}.\hskip
  1em plus 0.5em minus 0.4em\relax Springer, 1997. [Online]. Available:
  \url{http://www.amazon.ca/exec/obidos/redirect?tag=citeulike09-20\&amp;path=ASIN/0387946187}
\BIBentrySTDinterwordspacing

\bibitem{Candes2009b}
\BIBentryALTinterwordspacing
E.~J. Cand\`{e}s and B.~Recht, ``{Exact Matrix Completion via Convex
  Optimization},'' \emph{Foundations of Computational Mathematics}, vol.~9,
  no.~6, pp. 717--772, Apr. 2009. [Online]. Available:
  \url{http://www.springerlink.com/index/10.1007/s10208-009-9045-5}
\BIBentrySTDinterwordspacing

\bibitem{Ding2011}
X.~Ding, L.~He, and L.~Carin, ``{Bayesian Robust Principal Component
  Analysis},'' \emph{Image (Rochester, N.Y.)}, vol.~20, no.~12, pp. 3419--3430,
  2011.

\bibitem{Chandrasekaran2011}
\BIBentryALTinterwordspacing
V.~Chandrasekaran, S.~Sanghavi, P.~A. Parrilo, and A.~S. Willsky,
  ``{Rank-Sparsity Incoherence for Matrix Decomposition},'' \emph{SIAM Journal
  on Optimization}, vol.~21, no.~2, p. 572, Jun. 2011. [Online]. Available:
  \url{http://arxiv.org/abs/0906.2220}
\BIBentrySTDinterwordspacing

\bibitem{Hoff02}
P.~D. Hoff, A.~E. Raftery, and M.~S. Handcock, ``{Latent Space Approaches to
  Social Network Analysis},'' \emph{Journal of the American Statistical
  Association}, vol.~97, no. 460, pp. 1090--1098, 2002.

\bibitem{STFP}
D.~L. Sussman, M.~Tang, D.~E. Fishkind, and C.~E. Priebe, ``{A consistent
  adjacency spectral embedding for stochastic blockmodel graphs},''
  \emph{arXiv}, no. 1108 . 2228v3, pp. 1--21, 2012.

\bibitem{Fishkind2012}
\BIBentryALTinterwordspacing
D.~E. Fishkind, D.~L. Sussman, M.~Tang, J.~T. Vogelstein, and C.~E. Priebe,
  ``{Consistent adjacency-spectral partitioning for the stochastic block model
  when the model parameters are unknown},'' \emph{submitted for publication},
  p.~20, May 2012. [Online]. Available: \url{http://arxiv.org/abs/1205.0309}
\BIBentrySTDinterwordspacing

\bibitem{VP11_QAP}
J.~T. Vogelstein, J.~C. Conroy, L.~J. Podrazik, S.~G. Kratzer, D.~E. Fishkind,
  R.~J. Vogelstein, and C.~E. Priebe, ``{Fast Inexact Graph Mathing with
  Applications in Statistical Connectomics},'' \emph{Submitted to IEEE PAMI},
  2011.

\bibitem{VP11_unlabeled}
J.~T. Vogelstein and C.~E. Priebe, ``{Shuffled Graph Classification: Theory and
  Connectome Applications},'' \emph{Submitted to IEEE PAMI}, 2011.

\bibitem{VN}
G.~A. Coppersmith and C.~E. Priebe, ``{Vertex Nomination via Content and
  Context},'' \emph{Technology}, pp. 1--21, 2012.

\bibitem{BVN}
D.~S. Lee and C.~E. Priebe, ``{Bayesian Vertex Nomination},'' \emph{submitted
  for publication}, no.~i, 2012.

\bibitem{Bock2011}
\BIBentryALTinterwordspacing
C.~E. Priebe, J.~T. Vogelstein, and D.~D. Bock, ``{Optimizing the
  quantity/quality trade-off in connectome inference},'' \emph{Communications
  in Statistics Theory and Methods}, p.~7, 2011. [Online]. Available:
  \url{http://arxiv.org/abs/1108.6271}
\BIBentrySTDinterwordspacing

\end{thebibliography}
\bibliographystyle{IEEEtran}

\begin{IEEEbiographynophoto}{Joshua T. Vogelstein}
Joshua T. Vogelstein received a B.S degree from the Department of Biomedical Engineering at Washington University in St. Louis, MO in 2002, a M.S. degree from the Department of Applied Mathematics \& Statistics at Johns Hopkins University (JHU) in Baltimore, MD in 2009, and a Ph.D. degree from the Department of Neuroscience at Johns Hopkins School of Medicine in Baltimore, MD in 2009.  He is currently an Assistant Research Scientist in the Department of Applied Mathematics and Statistics at JHU, with a joint appointment in the Human Language Technology Center of Excellence.  His research interests primarily include statistical connectomics, including theory and applications for high-dimensional graph-valued data. His research has been featured in a number of prominent scientific and engineering journals including Annals of Applied Statistics, IEEE Transactions on Neural Systems and Rehabilitation Engineering, Nature Neuroscience, and Science Translational Medicine.
\end{IEEEbiographynophoto}

\begin{IEEEbiographynophoto}{William R. Gray}
William R. Gray graduated from Vanderbilt University in 2003 with a Bachelor’s degree in electrical engineering, and received his MS in electrical engineering in 2005 from the University of Southern California.  Currently, Will is a PhD student in electrical engineering at Johns Hopkins University, where he is conducting research in the areas of connectivity, signal and image processing, and machine learning.  He is also a member of the technical staff at the Johns Hopkins University Applied Physics Laboratory, where he manages projects in the Biomedicine and Undersea Warfare business areas.  Will is a member of IEEE, Eta Kappa Nu, and Tau Beta Pi
\end{IEEEbiographynophoto}


\begin{IEEEbiographynophoto}{R. Jacob Vogelstein}
R. Jacob Vogelstein received the Sc.B. degree in neuroengineering from Brown University, Providence, RI, and the Ph.D. degree in biomedical engineering from the Johns Hopkins University School of Medicine, Baltimore, MD.  He currently oversees the Applied Neuroscience programs at the Johns Hopkins University Applied Physics Laboratory as an Assistant Program Manager, and has an appointment as an Assistant Research Professor at the JHU Whiting School of Engineering’s Department of Electrical and Computer Engineering. He has worked on neuroscience technology for over a decade, focusing primarily on neuromorphic systems and closed-loop brain–machine interfaces. His research has been featured in a number of prominent scientific and engineering journals including the IEEE Transactions on Neural Systems and Rehabilitation Engineering, the IEEE Transactions on Biomedical Circuits and Systems, and the IEEE Transactions on Neural Networks.  
\end{IEEEbiographynophoto}

\begin{IEEEbiographynophoto}{Carey E. Priebe}
Carey E. Priebe received the B.S. degree in mathematics from Purdue University in 1984, the M.S. degree in computer science from San Diego State University in 1988, and the Ph.D. degree in information technology (computational statistics) from George Mason University in 1993. From 1985 to 1994 he worked as a mathematician and scientist in the US Navy research and development laboratory system. Since 1994 he has been a professor in the Department of Applied Mathematics and Statistics, Whiting School of Engineering, Johns Hopkins University, Baltimore, Maryland. At Johns Hopkins, he holds joint appointments in the Department of Computer Science, the Department of Electrical and Computer Engineering, the Center for Imaging Science, the Human Language Technology Center of Excellence, and the Whitaker Biomedical Engineering Institute. He is a past President of the Interface Foundation of North America - Computing Science \& Statistics, a past Chair of the American Statistical Association Section on Statistical Computing, a past Vice President of the International Association for Statistical Computing, and on the editorial boards of Journal of Computational and Graphical Statistics, Computational Statistics and Data Analysis, and Computational Statistics. His research interests include computational statistics, kernel and mixture estimates, statistical pattern recognition, statistical image analysis, dimensionality reduction, model selection, and statistical inference for high-dimensional and graph data. He is a Senior Member of the IEEE, a Lifetime Member of the Institute of Mathematical Statistics, an Elected Member of the International Statistical Institute, and a Fellow of the American Statistical Association.
\end{IEEEbiographynophoto}



\end{document}